\newtheorem{theorem}{Theorem}
\newtheorem{remark}{Remark}
\newcommand{\RNum}[1]{\uppercase\expandafter{\romannumeral #1\relax}}
\newtheorem{lemma}{Lemma}
\newtheorem{proposition}{Proposition}
\newtheorem{problem}{Problem}
\newtheorem{definition}{Definition}
\newcommand{\E}{\mathbb{E}}
\newenvironment{proofsketch}{\begin{proof}[\textit{Proof Sketch}]}{\end{proof}}
\newtcbox{\hiRatio}{on line,
  colback=white, colframe=blue!70!black,
  boxrule=0.6pt, arc=1.5pt,
  left=1pt,right=1pt,top=0.5pt,bottom=0.5pt}
\definecolor{lime}{HTML}{A6CE39}
\titlespacing{\section}{0pt}{1.2ex plus .0ex minus .0ex}{.3ex plus .0ex}
\titlespacing{\subsection}{0pt}{1.2ex plus .0ex minus .0ex}{.3ex plus .0ex}
\DeclareRobustCommand{\orcidicon}{%
	\begin{tikzpicture}
		\draw[lime, fill=lime] (0,0) 
		circle [radius=0.16] 
		node[white] {{\fontfamily{qag}\selectfont \tiny ID}};    \draw[white, fill=white] (-0.0625,0.095) 
		circle [radius=0.007];    \end{tikzpicture}
	\hspace{-2mm}}
\xdef\csname orcid\x\endcsname{\noexpand\href{https://orcid.org/\csname orcidauthor\x\endcsname}{\noexpand\orcidicon}}
\newcommand*\bigcdot{\mathpalette\bigcdot@{.5}}
\newcommand*\bigcdot@[2]{\mathbin{\vcenter{\hbox{\scalebox{#2}{$\m@th#1\bullet$}}}}}
	\def\@thmnote#1{\textit{[#1]}} 
\begin{document}
		\title{Taming the Heavy Tail: Age-Optimal Preemption}
		\author{
			Aimin Li, Yiğit İnce, and
			Elif Uysal, \emph{Fellow, IEEE}\\
			\textit{{\color{black}Communication Networks Research Group (CNG)}, EE Dept, METU, Ankara, Turkiye} \\
			\textit{E-mail: aimin@metu.edu.tr; 
            yigit.ince@metu.edu.tr;
            uelif@metu.edu.tr}\vspace{-1em}
			\thanks{Detailed proofs and additional results can be found in \cite{aiminpreemption}. {Code is available at: \textit{https://github.com/AiminLi-Hi/aoi-preemption-heavy-tail}}. This work was supported by the European Union (through ERC Advanced
Grant 101122990-GO SPACE-ERC-2023-A). Views and opinions expressed
are those of the authors only and do not necessarily reflect
those of the funding agencies.}	
			}

		\maketitle
		\allowdisplaybreaks

		 \begin{abstract}
This paper studies a continuous-time joint sampling-and-preemption problem, incorporating sampling and preemption penalties under general service-time distributions. We formulate the system as an impulse-controlled piecewise-deterministic Markov process (PDMP) and derive coupled \emph{integral} average-cost optimality equations via the dynamic programming principle, thereby avoiding the smoothness assumptions typically required for an average-cost Hamilton–Jacobi–Bellman quasi-variational inequality (HJB--QVI) characterization. A key invariance in the busy phase collapses the dynamics onto a one-dimensional busy-start boundary, reducing preemption control to an optimal stopping problem. Building on this structure, we develop an efficient policy iteration algorithm with heavy-tail acceleration, employing a hybrid (uniform/log-spaced) action grid and a far-field linear closure. Simulations under Pareto and log-normal service times demonstrate substantial improvements over AoI-optimal non-preemptive sampling and zero-wait baselines, achieving up to a $30\times$ reduction in average cost in heavy-tailed regimes. Finally, simulations uncover a counterintuitive insight: \textit{under preemption, delay variance, despite typically being a liability, can become a strategic advantage for information freshness.} 
\end{abstract}
		
		\IEEEpeerreviewmaketitle
		
		\section{Introduction}\label{sectionI}
        \subsection{Motivation}
		Age of Information (AoI) has emerged as a fundamental performance metric for real-time sensing, communication, and control systems \cite{yates2021age,kosta2017age}, shifting the focus from how much data is delivered (throughput-oriented) to how fresh and effective the delivered information is (goal-oriented) \cite{uysal2022semantic,gunduz2023timely,li-goal}. In practice, network systems often operate under highly variable and heavy-tailed delays arising from buffering, routing, or retransmissions. To mitigate such uncertainty, prior literature (e.g.,  \cite{sun2017update,sun2019wiener,TangCWYT23,bedewy2021optimal,li2024sampling,liyanaarachchi2024role,10619298,atasayar2025fresh,chen2024optimal,pan2023sampling,li2025optimal,10559951}) has predominantly focused on designing optimal sampling policies without preemption. A key takeaway is that optimal sampling under heavy-tailed delays often introduce \textit{deliberate nonzero} waiting, leading to the \textit{counterintuitive} age-optimal sampling principle: ``\textit{lazy is timely}”.  
        
        While effective, we argue that \textit{lazy sampling} alone is insufficient. The prevailing non-preemptive assumption creates a vulnerability: once a packet is in service, the system waits passively until an acknowledgment arrives. This results in severe \textit{head-of-line} blocking, preventing the transmission of fresher updates. This blockage is particularly detrimental under heavy-tailed delays (e.g., Pareto or log-normal), which can exhibit the Increasing Mean Residual Life (IMRL) property \cite{lawless2011statistical}, implying that the longer a packet has been in service, the longer it is likely to remain in service.
        
        To overcome this limitation, we introduce preemption as a control lever. Preemption allows the system to discard stale packets in service to accommodate fresher updates, introducing a nontrivial trade-off  ($i$) \textit{persisting} with the ongoing packet can utilize accumulated service effort, but risks prolonged future delays, whereas ($ii$) \textit{preempting} will sacrifice prior service progress and incur a fixed switching cost, but truncate the heavy-tailed future delays. In other words, the controller must compare the expected marginal benefit of continuing the current (already-aged) service attempt against the cost and renewal effect of restarting service with a fresh sample.

        To address this trade-off, we study a joint sampling and preemption problem under general continuous-time service-time distributions, explicitly accounting for sampling/preemption penalties. This problem is increasingly relevant for delay-sensitive applications in highly variable environments, such as space communications, drone networks, and IoT systems, where delay distributions are often heavy-tailed.
        
        \begin{figure}[t]
        \centering
        \includegraphics[width=0.9\columnwidth]{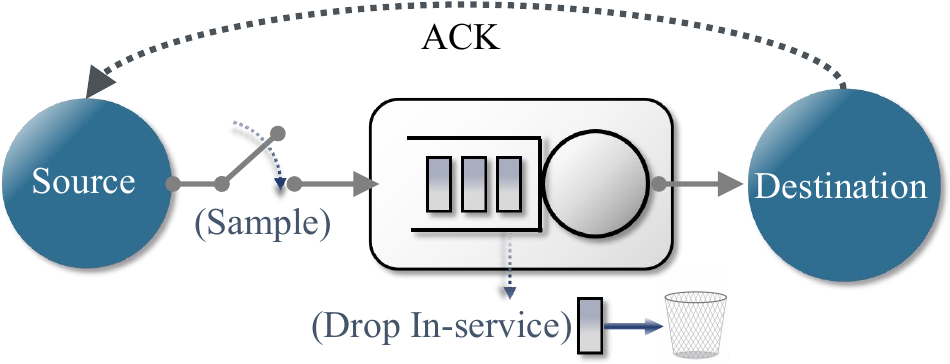}
        \caption{System Model. The source takes action $s$ (sample a fresh update) or $p$ (drop in-service update + sample a fresh update).}
        \label{fig:model}
        \vspace{-1em}
        \end{figure}
        
       \subsection{Related Works}
      A parallel line of work studies how the \emph{service discipline} shapes AoI under \emph{prescribed} queueing rules and specific stochastic models \cite{Gammaawakening,najm2017status,chen2016age,kaul2012status,costa2016age,kaul2018age,soysal2021age}. 
Here, the discipline (e.g., FCFS, LCFS with/without preemption, or priority rules) is prescribed, and the focus is to characterize the induced AoI performance. 

In a different vein, some works treat preemption as a \emph{decision variable} and seek AoI-optimal policies under simplifying assumptions (e.g., slotted decision epochs or bounded-delay models).
For instance, \cite{wang2019preemption} considers a discrete-time model with packet-level side information, while \cite{farazi2018harvesting_preemption} characterizes when preemption should be allowed under energy-harvesting constraints. \cite{asvadi2025delayed} introduces a \emph{delayed preemption} mechanism, where the optimization reduces to finding the best static deadline to trigger preemption.
\cite{banerjee_ulukus_isit24} adopts a time-slotted MDP formulation under bounded-delay modeling, with actions restricted to slot boundaries. 
In continuous time, \cite{arafa2019preemption} restricts attention to a sampling threshold and a fixed preemption cutoff. Moreover, \cite{banerjee_ulukus_isit24} and \cite{arafa2019preemption} do not jointly incorporate sampling and preemption penalties.  
In contrast, we formulate a continuous-time impulse-control problem that jointly optimizes \emph{when} to sample and \emph{when} to preempt under general service-time distributions, accounting for sampling and preemption costs. This yields an endogenous, state-dependent preemption boundary. 
Finally, while \cite{arafa2019preemption} postulates a threshold structure heuristically, we prove its optimality for exponential service times (Theorem~\ref{the1}).

        \subsection{Contributions and Novelties of This Work}
		\begin{itemize}
            \item \textbf{Impulse Control Framework for AoI}. We propose, to our knowledge, the first \textit{continuous-time impulse control framework} \cite{bensoussan1975nouvelles} for AoI optimization under general service-time distributions. Our formulation treats AoI as a controlled drifting state and sampling/preemption actions as instantaneous \textit{impulse controls}. This perspective allows us to rigorously model preemption dynamics without restricting decision epochs, and simultaneously paves the way for future \textit{hybrid control} extensions\footnote{Specifically, the drift generator can be augmented to include continuous controls such as service-rate adaptation without altering the fundamental problem structure. We leave this extension for future work and focus here on optimal sampling and preemption.}.

            \item \textbf{Integral ACOE and optimal-stopping reduction}. Instead of relying on a differential average-cost HJB--QVI and its smoothness assumptions, we derive coupled \emph{integral} average cost optimality equations (ACOE) using dynamic programming. A busy-phase invariance collapses the dynamics onto the busy-start boundary, reducing preemption to an optimal-stopping problem against the random completion time. We prove that the threshold-type policy, previously used as a heuristic (e.g., \cite{arafa2019preemption}), is in fact optimal under exponential service times.

            \item \textbf{Numerical solutions and insights:}
We develop an efficient average-cost policy-iteration algorithm tailored to heavy-tailed service times. Simulations under Pareto and log-normal service-time models show substantial gains in heavy-tailed regimes over the AoI-optimal non-preemptive sampling policy~\cite{sun2017update}, with up to $30\times$ reduction in average cost. Moreover, simulations reveal a \textit{counterintuitive} phenomenon: \textit{under preemption, delay variance can become beneficial rather than harmful.}.  
			
		\end{itemize}
        \section{System Model and Problem Formulation}
\label{sec:model}
We consider a single-source end-to-end (E2E) status update system, as depicted in Fig. \ref{fig:model}. In this setup, a source generates time-stamped status updates and transmits them to the destination over a channel with random service delay. The dynamics of the system are governed by the random delay and impulses:

\begin{itemize} \item \textbf{Random Service Delay}: The channel incurs a random service delay, modeled as a continuous non-negative random variable $Y$ with a Cumulative Distribution Function (CDF) $F_Y(\cdot)$ and density $f_Y(\cdot)$. In this paper, we assume:
\begin{equation}\label{eq1}
    \mathbb{E}[Y]<\infty,\quad\mathbb{E}[Y^2]<\infty.
\end{equation}
\item \textbf{Sampling (Idle-phase Impulse)}: When the channel is idle, the sensor decides when to \emph{generate and submit} a status update. \item \textbf{Preemption (Busy-phase Impulse)}: When the channel is busy, the sensor can intervene to \emph{discard} the in-service stale packet and transmit a fresh packet.  \end{itemize}

\subsection{Residual Life and Hazard Rate}
In status update systems allowing preemption, whether to continue the current update or preempt it depends on the conditional likelihood of completion given its \emph{service age}. We therefore characterize the residual life and hazard rate.

\begin{definition}[Residual Life and Hazard Rate \cite{shortle2018fundamentals}]
Let $Y\sim F_Y$ denote the (nonnegative) service time requirement of an update, with tail $\bar F_Y(t)\triangleq 1-F_Y(t)$. 
Fix a \textit{service age} $b\ge 0$ such that $\bar F_Y(b)>0$. The \emph{residual service time} at age $b$ is $R_b\triangleq (Y-b\,|\,Y\ge b)$, with CDF:
\begin{equation}\label{eq:residual_cdf}
\begin{aligned}
    F_{R_b}(t)&=\Pr\{R_b\le t\}
=\Pr\{Y-b\le t\mid Y\ge b\}
\\&=\frac{F_Y(b+t)-F_Y(b)}{\bar F_Y(b)},\qquad t\ge 0.
\end{aligned}
\end{equation}
The {hazard rate} is the instantaneous conditional completion rate at \textit{service age} $b$:
\begin{equation}\label{eq:hazard}
\begin{aligned}
    \lambda(b)&\triangleq 
    \lim_{\Delta\downarrow 0}\frac{\Pr\{b\le Y<b+\Delta \mid Y\ge b\}}{\Delta}\\
    &=\lim_{\Delta\downarrow 0}\frac{F_Y(b+\Delta)-F_Y(b)}{\Delta\,\bar F_Y(b)},
\end{aligned}
\end{equation}
whenever the limit exists. If $F_Y$ is absolutely continuous with density $f_Y$, then \eqref{eq:hazard} reduces to the standard form
\begin{equation}
\lambda(b)=\frac{f_Y(b)}{\bar F_Y(b)}.
\end{equation}
Moreover, under absolute continuity, $R_b$ has density $f_{R_b}(t)=f_Y(b+t)/\bar F_Y(b)$ for $t\ge 0$, and thus $\lambda(b)=f_{R_b}(0)$.
\end{definition}

The hazard rate $\lambda(b)$ quantifies the instantaneous conditional completion rate at service age $b$, and thus guides preemption decisions. 
In particular, IFR (i.e., $\lambda(b)$ is nondecreasing in $b$) means completion becomes more likely with service age, while DFR (i.e., $\lambda(b)$ is nonincreasing in $b$) means the opposite. If $Y$ is exponentially distributed, then $\lambda(b)$ is constant (memoryless), so the service age is uninformative.

\subsection{Impulse Control Formulation}
{In the following, we model the system as an impulse-controlled PDMP} \cite{davis1984piecewise}. The state follows a deterministic flow between jumps and experiences instantaneous jumps triggered by impulses or service completions.
\begin{itemize}
    \item \textbf{{System State}}: The system state is the hybrid tuple
    $
    X(t)=(\Delta(t),\,b(t),\,m(t))\in\mathcal S \triangleq \mathbb R_+^2\times\{I,B\},
    $
    where $\Delta(t)\in\mathbb R_+$ is the instantaneous AoI, $b(t)\in\mathbb R_+$ is the service age (elapsed time of the packet in service), and $m(t)\in\{I,B\}$ is the channel mode ($I$: idle, $B$: busy). Note that $b(t)=0$ whenever $m(t)=I$.
    \item \textbf{Admissible Impulses:} Two impulse actions are available, sampling $s$ and preemption $p$. The admissible action set depends on the mode:
    \begin{equation}
    \mathcal A(\Delta,b,m)=
    \begin{cases}
    \{s\}, & m=I,\\
    \{p\}, & m=B.
    \end{cases}
    \end{equation}
    \item \textbf{Deterministic Flow Between Jumps:} Between jumps (i.e., when no impulse is applied and no service completion occurs), the continuous components evolve according to $\dot \Delta(t)=1$ and $
    \dot b(t)=\mathbbm 1_{\{m(t)=B\}}$.
    \item \textbf{Stochastic Jump (Service Completion):} In the busy mode, the service completion time is governed by the hazard rate $\lambda(b)$, where $b$ is the current service age. Upon completion, a delivery occurs, and the system jumps to the idle mode, and the AoI is reset to the realized service time, which equals the service age $b$ at completion. Hence the completion jump map $\mathcal{M}_c:\mathcal{S}\rightarrow\mathcal{S}$ is
    \begin{equation}\label{eq:Mc}
        \mathcal M_c(\Delta,b,B)=(b,0,I).
    \end{equation}
    \item \textbf{Impulse Jumps (Control Actions):} The sampling impulse operator $\mathcal M_s:\mathcal S\to\mathcal S$ is
    \begin{equation}\label{eq:Ms}
        \mathcal M_s(\Delta,0,I)=(\Delta,0,B),
    \end{equation}
    meaning that a fresh update is generated and immediately enters service, resetting the service age to $0$. The preemption impulse operator $\mathcal M_p:\mathcal S\to\mathcal S$ is
    \begin{equation}\label{eq:Mp}
        \mathcal M_p(\Delta,b,B)=(\Delta,0,B),
    \end{equation}
    meaning that the ongoing service is aborted and immediately restarted with a newly sampled update; the AoI remains unchanged at the preemption time.

    \item \textbf{Infinitesimal Generator (No Impulse Applied):}
    Let $\mathcal L$ be the infinitesimal generator of the uncontrolled PDMP. For any test function $h$ that is continuously differentiable in $(\Delta,b)$ for each mode $m\in\{I,B\}$, the infinitesimal generator is defined by:
    \begin{equation}
    (\mathcal L h)(x)\triangleq \lim_{dt\downarrow 0}\frac{\mathbb E[h(X(t+dt))\mid X(t)=x]-h(x)}{dt}.
    \end{equation}
    For notational simplicity, define $h_I(\Delta)\triangleq h(\Delta,0,I)$ and $h_B(\Delta,b)\triangleq h(\Delta,b,B)$. The following lemma gives the infinitesimal generator.
    \begin{lemma}[Infinitesimal Generator]\label{lemma:infgen}
   When the channel is idle,
    \begin{equation}\label{eq:gen_idle}
        (\mathcal L h)(\Delta,0,I)=\frac{dh_I(\Delta)}{d\Delta}.
    \end{equation}
    When the channel is busy,
    \begin{equation}\label{eq:gen_busy}
    \begin{aligned}
        (\mathcal L h)(\Delta,b,B)
        &=\partial_\Delta h_B(\Delta,b)+\partial_b h_B(\Delta,b)\\
        &+\lambda(b)\big(h_I(b)-h_B(\Delta,b)\big).
    \end{aligned}
    \end{equation}
    \end{lemma}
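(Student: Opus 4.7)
The plan is to prove the two formulas by a direct evaluation of the generator definition, handling the idle and busy modes separately. The key idea is that, in each mode, the state evolution over an infinitesimal interval $[t,t+dt)$ decomposes into a deterministic drift plus (in the busy case) a hazard-rate-driven completion jump; the limit then reduces to a first-order Taylor expansion of $h$ along the drift, combined with a jump term weighted by the conditional completion probability.

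For the idle branch, I would observe that under the uncontrolled dynamics no impulse is applied and no service is in progress, so the flow equations give $\dot{\Delta}=1$, $\dot{b}=0$, and $m$ frozen at $I$. Hence the conditional path is deterministic, $X(t+dt)=(\Delta+dt,0,I)$ almost surely, and the conditional expectation reduces to $h_I(\Delta+dt)$. A first-order Taylor expansion under the $C^1$ hypothesis, followed by division by $dt$ and passage to the limit, immediately yields \eqref{eq:gen_idle}.

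For the busy branch, I would condition on whether a service completion occurs in $[t,t+dt)$. By the hazard rate definition in \eqref{eq:hazard}, the conditional probability of completion within $dt$ given service age $b$ is $\lambda(b)\,dt + o(dt)$. On the no-completion event (probability $1-\lambda(b)\,dt + o(dt)$) the state drifts deterministically to $(\Delta+dt,b+dt,B)$, contributing $h_B(\Delta+dt,b+dt)$. On the completion event (probability $\lambda(b)\,dt + o(dt)$) the completion jump map \eqref{eq:Mc} gives post-jump state $(b,0,I)$, contributing $h_I(b)$. Expanding $h_B(\Delta+dt,b+dt)=h_B(\Delta,b)+(\partial_\Delta h_B+\partial_b h_B)\,dt + o(dt)$ by the $C^1$ assumption, subtracting $h_B(\Delta,b)$, dividing by $dt$, and sending $dt\downarrow 0$ produces \eqref{eq:gen_busy}.

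The main subtlety I anticipate is that when completion happens at an interior instant $t+\tau$ with $\tau\in(0,dt)$, the process first drifts by $\tau$ before the jump and then sits idly for time $dt-\tau$, so the realized increment of $h$ depends on $\tau$ rather than just on the endpoints. These contributions combine with the completion weight $\lambda(b)\,dt$ to yield corrections of order $\lambda(b)\,dt\cdot O(dt)=o(dt)$ that do not affect the limit. I would make this precise by a continuity-based bound on the integrand (invoking local boundedness of $h$ and of $\lambda$ near $b$) and absorbing the residual into the error term, which is the standard PDMP bookkeeping of \cite{davis1984piecewise}. With this step justified, the limit interchange goes through and the two formulas follow.
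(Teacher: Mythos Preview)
Your proposal is correct and follows essentially the same approach as the paper's proof: evaluate the generator directly by conditioning on whether a completion occurs in $[t,t+dt)$, use the hazard-rate expansion $\lambda(b)\,dt+o(dt)$ for the completion probability, drift the state deterministically on the no-completion event, apply the jump map \eqref{eq:Mc} on the completion event, and finish with a first-order Taylor expansion of $h_B$. Your discussion of the interior-time subtlety (completion at $t+\tau$ with $\tau\in(0,dt)$) is in fact more careful than the paper, which simply writes the two-branch expectation directly and absorbs all such terms into $o(dt)$ without comment.
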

\begin{proof}
    The proof is given in \cite[Appendix A]{aiminpreemption}.
\end{proof}    
\item \textbf{Running Cost}: The running cost function $g:\mathcal S\to\mathbb R_+$ is given by: \begin{equation}g(\Delta,b,m)=\Delta.\end{equation}
\item \textbf{Impulse Cost}: Each impulse incurs a fixed cost:
    \begin{equation}\label{eq:impulse_cost}
        K(x,a)=\kappa_a,\qquad a\in\{s,p\},
    \end{equation}
    where $\kappa_s>0$ and $\kappa_p>0$.
\end{itemize}

\subsection{Problem Formulation}
\label{subsec:objective}
Let ${\mathcal F_t}_{t\ge 0}$ be the natural filtration of $X$, i.e., $\mathcal F_t=\sigma{X(s):0\le s\le t}$. Let $\Pi$ denote the set of admissible policies.
An \emph{admissible} impulse control policy $\pi\in\Pi$ is specified by a sequence $\{(\tau_i,a_i)\}_{i\ge 1}$, where $\{\tau_i\}$ are impulse times and $\{a_i\}$ are impulse actions, such that:
(i) each $\tau_i$ is an $\{\mathcal F_t\}$-stopping time and $a_i\in \mathcal A(X(\tau_i^-))$ is $\mathcal F_{\tau_i^-}$, ensuring non-anticipativity;
(ii) $\tau_i<\tau_{i+1}$ for all $i$; and (iii) $\tau_i\uparrow\infty$ almost surely (i.e., impulses do not accumulate in finite time).

For a policy $\pi\in\Pi$, define the long-run average cost as:
\begin{equation}\label{eq:avg_cost}
\rho^\pi \triangleq
\limsup_{T\to\infty}\frac{1}{T}\,
\mathbb E^\pi\!\left[\int_0^T \Delta(t)\,dt
+\sum_{i:\,\tau_i\le T}K(X(\tau_i^-),a_i)\right],
\end{equation}
where $X(\tau_i^-)$ denotes the pre-impulse state. Our goal is to find an optimal policy $\pi^\star$ minimizing $\rho^\pi$:
\begin{problem}[Optimal Sampling and Preemption Problem]\label{p1}
\begin{equation}
    \rho=\min_{\pi\in\Pi}\rho^{\pi}.
\end{equation}
\end{problem}

		\section{Average Cost Optimality Equations}
        This section introduces optimality equations/inequalities for solving problem \ref{p1}.
        \subsection{Coupled HJB-QVI Heuristic}
        The Hamilton--Jacobi--Bellman quasi-variational inequality (HJB--QVI) provides a classical infinitesimal characterization for impulse control problems, capturing the local trade-off between continuing the process and applying an impulse. While existence and regularity results for discounted-cost QVIs are well established, a fully rigorous strong (differential) HJB--QVI theory under the \emph{long-run average cost} criterion is substantially more delicate. In particular, the relative value function is typically unbounded and may fail to be smooth at switching boundaries; accordingly, Lions and Perthame describe the strong differential formulation in the average-cost setting as ``\textit{essentially heuristic}'' \cite{bensoussan1984impulse}. The heuristic average-cost HJB--QVI takes the form \cite[Eq.~(19)]{bensoussan1984impulse}
        \begin{equation}\label{eq:heuristic_qvi_general}
\min\Big\{\, g(x)+(\mathcal L V)(x)-\rho,\ \ \mathcal M V(x)-V(x)\,\Big\}=0,\forall x,
\end{equation}
where $V$ is the relative value function normalized at a reference state (we set $V(0,0,B)=0$), $g$ is the running cost rate, $\rho$ is the optimal long-run average cost, and $\mathcal L$ is the infinitesimal generator of the uncontrolled PDMP. The intervention (impulse) operator $\mathcal M$ is defined by
\begin{equation}\label{eq:intervention_operator}
(\mathcal M V)(x)\triangleq \inf_{a\in\mathcal A(x)}\Big\{K(x,a)+V\big(\mathcal M_a(x)\big)\Big\}, \forall x,
\end{equation}
where $\mathcal A(x)$ is the admissible action set, $K(x,a)$ is the impulse cost, and $\mathcal M_a:\mathcal S\to\mathcal S$ is the impulse operator.

Although heuristic, this formulation provides insights into the structure of the optimal policy. The following Proposition instantiates this heuristic HJB-QVI for our specific problem:
\begin{proposition}[Heuristic Coupled HJB-QVI]\label{prop:heuristic_qvi}
    Let $h_I(\Delta)\triangleq V(\Delta,0,I)$ and $h_B(\Delta,b)\triangleq V(\Delta,b,B)$ denote the relative value functions in the idle and busy modes, respectively. Heuristically, they satisfy:

   \noindent\textbf{I. Idle mode (sampling vs.\ waiting):}
\begin{equation}\label{eq:qvi_idle}
\min\Big\{
\underbrace{\Delta + \partial_{\Delta}h_I(\Delta) - \rho}_{\mathcal C_I(\Delta)\,:\ \text{wait}},
\ \ 
\underbrace{\kappa_s + h_B(\Delta,0) - h_I(\Delta)}_{\mathcal S(\Delta)\,:\ \text{sample}}
\Big\}=0.
\end{equation}
Define the (heuristic) waiting and sampling regions by the complementarity conditions
\begin{equation}
\begin{aligned}
\mathcal W_I \triangleq \{\Delta:\ \mathcal C_I(\Delta)=0,\ \mathcal S(\Delta)\ge 0\},\\
\qquad
\mathcal R_I \triangleq \{\Delta:\ \mathcal S(\Delta)=0,\ \mathcal C_I(\Delta)\ge 0\}.
\end{aligned}
\end{equation}
Thus, sampling is optimal when $\Delta\in\mathcal R_I$, while waiting is optimal when $\Delta\in\mathcal W_I$.

 \noindent\textbf{II. Busy mode (preemption vs.\ continuing service):}
\begin{equation}\label{eq:qvi_busy}
\resizebox{1\hsize}{!}{$
\min\Big\{
\underbrace{\Delta + (\mathcal L h_B)(\Delta,b) - \rho}_{\mathcal C_B(\Delta,b)\,:\ \text{continue}},
\ \ 
\underbrace{\kappa_p + h_B(\Delta,0) - h_B(\Delta,b)}_{\mathcal P(\Delta,b)\,:\ \text{preempt}}
\Big\}=0,$}
\end{equation}
where from Lemma~\ref{lemma:infgen} we have
$
(\mathcal L h_B)(\Delta,b)
=\partial_\Delta h_B(\Delta,b)+\partial_b h_B(\Delta,b)
+\lambda(b)\big(h_I(b)-h_B(\Delta,b)\big).
$
Define the (heuristic) continuation and preemption regions by
\begin{equation}
\begin{aligned}
\mathcal W_B \triangleq \{(\Delta,b):\ \mathcal C_B(\Delta,b)=0,\ \mathcal P(\Delta,b)\ge 0\},\\
\qquad
\mathcal R_B \triangleq \{(\Delta,b):\ \mathcal P(\Delta,b)=0,\ \mathcal C_B(\Delta,b)\ge 0\}.
\end{aligned}
\end{equation}
Thus, preemption is optimal when $(\Delta,b)\in\mathcal R_B$, while continuing service is optimal when $(\Delta,b)\in\mathcal W_B$.
\end{proposition}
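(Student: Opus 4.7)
The plan is to derive both coupled HJB--QVIs from the dynamic programming principle (DPP) for the long-run average cost, tacitly assuming the relative value function is continuously differentiable in $(\Delta,b)$ in the interior of each mode---this smoothness assumption is precisely what makes the statement ``heuristic.''

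First I would write the infinitesimal DPP: for any $\{\mathcal{F}_t\}$-stopping time $\tau$ and any admissible continuation,
\begin{equation*}
V(x) \le \mathbb{E}^{\pi}\!\!\left[\int_0^{\tau}\!(g(X_s)-\rho)\,ds + \!\!\!\sum_{i:\tau_i \le \tau}\!\! K(X(\tau_i^-),a_i) + V(X_{\tau})\right],
\end{equation*}
with equality under an optimal policy. I would then localize this identity to two competing alternatives: (a) apply no impulse on $[0, dt]$ and follow an optimal policy thereafter; or (b) apply an admissible impulse $a \in \mathcal{A}(x)$ immediately at $t=0^+$. Alternative (a) gives $V(x) \le \mathbb{E}[\int_0^{dt}(g(X_s)-\rho)\,ds + V(X(dt))]$; dividing by $dt$, invoking Lemma~\ref{lemma:infgen}, and letting $dt \downarrow 0$, yields the continuation inequality $g(x) + (\mathcal{L}V)(x) - \rho \ge 0$. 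Alternative (b) collapses directly to $V(x) \le K(x,a) + V(\mathcal{M}_a(x))$, and infimizing over $a \in \mathcal{A}(x)$ gives $(\mathcal{M}V)(x) - V(x) \ge 0$. The DPP forces at least one of these to be tight at each $x$, producing exactly \eqref{eq:heuristic_qvi_general}.

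Specializing to each mode then yields \eqref{eq:qvi_idle} and \eqref{eq:qvi_busy}. In the idle mode, $\mathcal{A}(x)=\{s\}$ and $\mathcal{M}_s(\Delta,0,I)=(\Delta,0,B)$, so the intervention branch reduces to $\kappa_s + h_B(\Delta,0) - h_I(\Delta) \ge 0$, while Lemma~\ref{lemma:infgen} simplifies $(\mathcal{L}V)(\Delta,0,I)$ to $\partial_\Delta h_I(\Delta)$, giving the two comparands in \eqref{eq:qvi_idle}. In the busy mode, $\mathcal{A}(x)=\{p\}$ and $\mathcal{M}_p(\Delta,b,B)=(\Delta,0,B)$, so the intervention branch gives $\kappa_p + h_B(\Delta,0) - h_B(\Delta,b) \ge 0$, while the busy generator \eqref{eq:gen_busy} supplies the explicit form of $(\mathcal{L}h_B)(\Delta,b)$ in the continuation branch, producing \eqref{eq:qvi_busy}. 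The complementarity sets $\mathcal{W}_I$, $\mathcal{R}_I$, $\mathcal{W}_B$, $\mathcal{R}_B$ are then read off directly from which of the two branches attains zero.

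The main obstacle---and the precise reason the statement is flagged heuristic---is the differentiability step used in passing $dt \downarrow 0$ on the continuation branch. The relative value function is typically non-smooth at the free boundary separating the waiting and intervention regions, and is additionally unbounded in $\Delta$; both features are incompatible with the standard smooth verification theorems for bounded-cost QVIs. A fully rigorous derivation would need to either interpret \eqref{eq:qvi_idle}--\eqref{eq:qvi_busy} in the viscosity sense, or---as we pursue in the sequel---bypass the differential formulation entirely by working with an integral ACOE derived from the DPP without ever taking $dt \downarrow 0$.
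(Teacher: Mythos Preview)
Your proposal is correct and ultimately executes the same specialization step the paper does, but it supplies substantially more than the paper's own proof. The paper treats the general average-cost HJB--QVI \eqref{eq:heuristic_qvi_general} as a black-box heuristic imported from \cite[Eq.~(19)]{bensoussan1984impulse} and then gives a one-line proof: ``Substituting \eqref{eq:Ms}--\eqref{eq:impulse_cost} into \eqref{eq:heuristic_qvi_general} accomplishes the proof.'' You instead \emph{derive} \eqref{eq:heuristic_qvi_general} from an infinitesimal dynamic-programming argument before specializing. Your extra DPP derivation is sound as a heuristic (and your caveats about smoothness at the free boundary are apt and align with the paper's discussion), but it is not required for the proposition as stated: the paper regards \eqref{eq:heuristic_qvi_general} as already established at the heuristic level, so only the substitution of the model-specific jump maps, costs, and Lemma~\ref{lemma:infgen} is needed.
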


\begin{proof}
Substituting \eqref{eq:Ms}-\eqref{eq:impulse_cost} into \eqref{eq:heuristic_qvi_general} accomplishes the proof.
\end{proof}
		\subsection{Optimal Stopping-Time}\label{subsec:manifold}
       The strong (differential) HJB--QVI under the long-run average-cost criterion is often used only as an infinitesimal heuristic \cite{bensoussan1984impulse}, since the relative value function may be unbounded and non-smooth near switching boundaries, and classical smooth-fit conditions typically fail for impulse-controlled PDMPs under general service-time distributions. To avoid such regularity requirements, we work instead with an \emph{integral dynamic programming principle} \cite{stettner2022approximation}, which yields coupled integral ACOEs.   
In the busy phase, this integral formulation reduces the impulse decision to an optimal stopping problem against the random service completion time.

Moreover, our PDMP admits a structural simplification in the busy mode. 
Starting from a busy-start state $(\Delta,b,m)=(y,0,B)$ on the event ${t<Y}$ the uncontrolled flow yields $(\Delta(t),b(t))=(y+t,t)$, and hence the difference $\Delta(t)-b(t)\equiv y$ remains invariant 
until either service completion or preemption. Consequently, the busy-phase evolution can be parameterized by a single scalar $y$ 
on the \emph{busy-start boundary} $\{(\Delta,b,B): b=0\}$. We next derive coupled integral average-cost optimality equations (ACOEs) for $(\rho,h_I,v)$, 
in which the busy-phase impulse decision reduces to an optimal stopping problem with respect to the completion time.

\begin{theorem}[Integral ACOEs and Optimal-stopping Reduction]\label{thm:integral_bellman} 
The optimal average cost $\rho$ and the relative value function $(h_I(\cdot),v(\cdot))$ normalized by $v(0)=0$ satisfy the following coupled ACOE:

\noindent\textbf{I. Idle phase:} For all $\Delta\ge 0$,
\begin{equation}\label{eq:idle_integral_bellman}
\begin{aligned}
h_I(\Delta)=\rho\Delta-\frac{\Delta^2}{2}+\inf_{z\ge \Delta}\left\{\kappa_s+v(z)+\frac{z^2}{2}-\rho z\right\}.
\end{aligned}
\end{equation}

\medskip
\noindent\textbf{II. Busy phase:} For all $y\ge 0$,
\begin{equation}\label{eq:busy_integral_bellman}
\begin{aligned}
    v(y)
=\inf_{\theta\ge 0}\Big\{
Q(y,\theta;\rho)
\Big\},
\end{aligned}
\end{equation}
where
\begin{equation}\label{eq24}
\begin{aligned}
      Q(y,\theta;\rho)=  &(y-\rho)A(\theta)+J_1(\theta)
\\&+\int_{0}^\theta h_I(t)\,dF_Y(t)
+\bar F_Y(\theta)\big(\kappa_p+v(y+\theta)\big),
\end{aligned}
\end{equation}
with $A(\theta)\triangleq \mathbb{E}[\min\{Y,\theta\}]=\int_0^\theta \bar F(t)\,dt$ and $J_1(\theta)\triangleq \mathbb{E}\!\left[\int_0^{\min\{Y,\theta\}} t\,dt\right]
=\int_0^\theta t\,\bar F_Y(t)\,dt$. 
Moreover, if the infima in \eqref{eq:idle_integral_bellman}--\eqref{eq:busy_integral_bellman} are attained, then any
measurable selections of minimizers induce an optimal stationary policy as follows.

\medskip
\noindent\textbf{(a) Optimal sampling policy.}
For each $\Delta\ge 0$, if the infima are attained, any minimizers induce an optimal stationary policy:
\begin{equation}\label{eq:zstar_def}
z^\star(\Delta)\in \arg\min_{z\ge \Delta}\left\{\kappa_s+v(z)+\frac{z^2}{2}-\rho z\right\}.
\end{equation}
Starting from idle state $(\Delta,0,I)$, it is optimal to wait until the AoI reaches $z^\star(\Delta)$ (i.e., wait $u^\star(\Delta)=z^\star(\Delta)-\Delta$ time units) and then sample.

\medskip
\noindent\textbf{(b) Optimal preemption policy.}
For each busy-start AoI $y\ge 0$, choose the next preemption time based on:
\begin{equation}\label{eq:taustar_def}
\theta^\star(y)\in \arg\min_{\theta\ge 0}\Big\{Q(y,\theta;\rho)
\Big\}.
\end{equation}
Starting from a busy-start AoI $y$, it is optimal to continue service until
either completion occurs or the service age reaches $\theta^\star(y)$; if the service is
still incomplete at age $\theta^\star(y)$, then preempt and restart with a fresh update. 
\begin{proof}
   { The proof is given in \cite[Appendix B]{aiminpreemption}.}
\end{proof}
\end{theorem}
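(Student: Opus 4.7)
The plan is to bypass the differential HJB--QVI and instead apply an integral dynamic programming principle to the impulse-controlled PDMP, exploiting the busy-phase invariance to collapse the busy-mode dynamics onto a one-dimensional boundary. The starting observation is that, from any busy-start state $(y,0,B)$, the uncontrolled flow yields $(\Delta(t),b(t))=(y+t,t)$ on $\{t<Y\}$, so the difference $\Delta(t)-b(t)\equiv y$ is preserved until either completion or preemption fires. This invariance legitimizes the definition $v(y)\triangleq h_B(y,0)$ on the busy-start boundary and the treatment of $y$ as the sole busy-phase state variable.

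I would then derive the idle-phase ACOE by conditioning on the next sampling epoch. Because the only admissible action in idle mode is $s$ and $\dot{\Delta}=1$, choosing a waiting time $u\ge 0$ is equivalent to choosing a target AoI $z=\Delta+u\ge\Delta$ at which to sample. Direct integration yields the accumulated running cost $\int_0^u (\Delta+t)\,dt = z^2/2-\Delta^2/2$; the sampling impulse costs $\kappa_s$ and, via \eqref{eq:Ms}, sends the state to $(z,0,B)$ with continuation value $v(z)$. Inserting these into the average-cost Bellman relation and subtracting $\rho(z-\Delta)$ for the elapsed time immediately yields \eqref{eq:idle_integral_bellman}, and the minimizer in \eqref{eq:zstar_def} identifies the optimal sampling threshold.

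The busy-phase ACOE follows by conditioning on the first of the preemption age $\theta$ and the completion time $Y$. Let $M\triangleq\min\{Y,\theta\}$. The accumulated running cost is $\int_0^M(y+t)\,dt = yM+M^2/2$, whose expectation equals $yA(\theta)+J_1(\theta)$ by a Fubini exchange using $\mathbb{E}[M]=\int_0^\theta \bar F_Y(t)\,dt = A(\theta)$ and $\mathbb{E}[M^2]/2 = \int_0^\theta t\,\bar F_Y(t)\,dt = J_1(\theta)$. On $\{Y<\theta\}$ the service completes, so by \eqref{eq:Mc} the state jumps to $(Y,0,I)$ with continuation contribution $\int_0^\theta h_I(t)\,dF_Y(t)$; on $\{Y\ge\theta\}$ preemption fires and by \eqref{eq:Mp} the state returns to the busy-start boundary at $(y+\theta,0,B)$ with continuation $\bar F_Y(\theta)\bigl(\kappa_p+v(y+\theta)\bigr)$. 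Subtracting $\rho A(\theta)$ assembles $Q(y,\theta;\rho)$ as in \eqref{eq24}, and the minimizer in \eqref{eq:taustar_def} gives the optimal stopping age.

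The main obstacle will be rigorously justifying the average-cost DPP and extracting a measurable selection of minimizers without smoothness hypotheses on $(h_I,v)$. I plan to handle this by (i) applying the integral DPP of \cite{stettner2022approximation} cycle-by-cycle between consecutive sampling epochs, which form natural regeneration instants since each idle--busy excursion depends only on the busy-start AoI $y$; (ii) using the moment bound $\mathbb{E}[Y^2]<\infty$ together with $\kappa_s,\kappa_p>0$ to preclude impulse accumulation and ensure the relative value function is well defined up to the normalization $v(0)=0$; and (iii) invoking a Berge-type measurable-selection argument to convert the pointwise minimizers in \eqref{eq:zstar_def}--\eqref{eq:taustar_def} into an admissible stationary policy attaining $\rho$.
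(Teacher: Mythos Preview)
Your proposal tracks the paper's approach closely: exploit the busy-phase invariance $\Delta(t)-b(t)\equiv y$ to collapse onto the busy-start boundary, compute one-step integral objectives by conditioning on the next event, and assemble the coupled ACOEs. The computational derivations you sketch for \eqref{eq:idle_integral_bellman} and \eqref{eq24} coincide with the paper's Lemmas~\ref{lem:idle_os_equiv} and~\ref{lem:busy_os_equiv} almost line for line.

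There is, however, one step you gloss over that the paper isolates explicitly. You parameterize the busy-phase decision directly by a deterministic age $\theta$, but an admissible policy only requires the preemption time to be a stopping time with respect to the within-attempt filtration $\sigma\{\mathbf{1}_{\{Y\le t\}}:t\ge0\}$, and could in principle be randomized or history-dependent. The paper's Lemma~\ref{lem:busy_os_equiv} closes this gap: on the survival event $\{Y>t\}$ the observed history is just the deterministic drift trajectory plus ``no completion yet,'' so any non-anticipative rule must prescribe a single action at each service age, and the first age at which it says ``preempt'' is a deterministic constant $\theta$; the companion Lemma~\ref{lem:idle_os_equiv} shows randomization cannot improve the idle decision either. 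Your verification sketch (step~(i)) is also thinner than the paper's: rather than invoking \cite{stettner2022approximation} abstractly, the paper writes the one-step Bellman inequality
\[
h(X_N)\le \mathbb{E}^\pi\!\left[\int_{S_N}^{S_{N+1}}(\Delta(t)-\rho)\,dt+K_N+h(X_{N+1})\ \Big|\ \mathcal F_{S_N}\right]
\]
at every decision epoch, telescopes over $N$, and divides by $\mathbb{E}^\pi[S_n]$ to obtain $\rho^\pi\ge\rho$ for every admissible $\pi$, with equality when $\pi=\pi^\star$ attains the infima. That telescoping is what actually converts your one-cycle identities into a statement about the long-run average, and it should be written out rather than deferred to a citation.
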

\begin{remark}
On sample paths where successive preemptions occur before any completion, the busy-start AoI
after each preemption updates as $y_{i}=y_{i-1}+\theta^\star(y_{i-1})$. If completion
occurs before $\theta^\star(\cdot)$, the system returns to the idle mode.
\end{remark}
\section{Case Study}
\subsection{Optimal Policy Structure without Preemption \texorpdfstring{\cite{sun2017update}}{}}
If preemption is not allowed (i.e., the preemption action is removed from the admissible action set), then a busy period starting from the busy-start state $(\Delta,b)=(y,0)$ must run to completion.
Accordingly, the busy-phase optimality equation \eqref{eq:busy_integral_bellman} reduces to the only feasible choice $\theta=\infty$, yielding
\begin{equation}\label{acoe:busy_np}
    v(y)=Q(y,\infty;\rho)
    =y\mathbb{E}[Y]-\rho\mathbb{E}[Y]+\frac{\mathbb{E}[Y^2]}{2}+\mathbb{E}\!\left[h_I(Y)\right],
\end{equation}
where $Y\sim F_Y$. This expression is affine in $y$ (all remaining terms are constants with respect to $y$); under the normalization $v(0)=0$, it reduces to the linear form $v(y)=y\mathbb{E}[Y]$. Substituting this into the idle-phase ACOE \eqref{eq:idle_integral_bellman} yields
\begin{equation}\label{eq28}
h_I(\Delta)=\rho\Delta-\frac{\Delta^2}{2}
+\inf_{z\ge \Delta}\left\{\kappa_s+\frac{z^2}{2}+(\mathbb{E}[Y]-\rho) z\right\}.
\end{equation}
The minimizer of the convex quadratic term in \eqref{eq28} gives the optimal sampling policy
\begin{equation}
z^{\star}(\Delta)=\max\{\Delta,\rho-\mathbb{E}[Y]\},
\end{equation}
which recovers the optimal sampling structure in \cite[Thm.~4]{sun2017update}.

\subsection{Exponential Service Times: Explicit Solutions}
When the service time is exponentially distributed, the following theorem establishes a threshold structure for the optimal sampling and preemption policies.
\begin{theorem}[Structure of Optimal Sampling and Preemption]\label{the1}
If $Y\sim\text{Exp}(\lambda)$, the following statements hold:\\
(i). The busy-phase ACOE \eqref{eq:busy_integral_bellman} admits an explicit closed-form solution: $v(y)=\frac{y}{\lambda},\forall y\ge0$;\\
(ii). There exists an optimal preemption policy that is independent of the busy-start AoI $y$; equivalently, the optimal stopping time can be chosen as a constant threshold $\theta^\star(y)\equiv \theta^\star$.\\
(iii). There exists an optimal sampling policy with a constant sampling threshold $z^{\star}(\Delta)=\max\{\Delta,\rho-\frac{1}{\lambda}\},\forall \Delta\ge0$.
\end{theorem}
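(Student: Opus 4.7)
The plan is a guess-and-verify approach driven by the memoryless property of the exponential distribution, leveraging the already-derived integral ACOEs in Theorem~\ref{thm:integral_bellman}.

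For part (i), I would substitute the candidate $v(y)=y/\lambda$ into \eqref{eq24}. Using $\bar F_Y(\theta)=e^{-\lambda\theta}$ and $A(\theta)=(1-e^{-\lambda\theta})/\lambda$, the coefficient of $y$ becomes
\begin{equation*}
y\,A(\theta)+\bar F_Y(\theta)\cdot\frac{y}{\lambda}=\frac{y(1-e^{-\lambda\theta})}{\lambda}+\frac{y\,e^{-\lambda\theta}}{\lambda}=\frac{y}{\lambda},
\end{equation*}
which is \emph{independent of} $\theta$. Hence the busy-phase cost functional decouples as $Q(y,\theta;\rho)=y/\lambda+\Phi(\theta;\rho)$, where
\begin{equation*}
\Phi(\theta;\rho)\triangleq -\rho A(\theta)+J_1(\theta)+\lambda\!\int_0^\theta h_I(t)e^{-\lambda t}dt+e^{-\lambda\theta}\!\left(\kappa_p+\frac{\theta}{\lambda}\right)
\end{equation*}
depends only on $(\theta,\rho)$. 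Taking infimum over $\theta\ge 0$ yields $v(y)=y/\lambda+\inf_\theta\Phi(\theta;\rho)$, and evaluating at $y=0$ with the normalization $v(0)=0$ forces $\inf_\theta\Phi(\theta;\rho)=0$. Thus $v(y)=y/\lambda$ is consistent with the busy-phase ACOE for all $y\ge 0$.

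Part (ii) is then an immediate consequence of the additive separation: since $Q(y,\theta;\rho)=y/\lambda+\Phi(\theta;\rho)$, the argmin in $\theta$ does not depend on $y$, so any $\theta^\star\in\arg\min_{\theta\ge 0}\Phi(\theta;\rho)$ serves uniformly, giving $\theta^\star(y)\equiv\theta^\star$. For part (iii), I substitute $v(z)=z/\lambda$ into the idle-phase ACOE \eqref{eq:idle_integral_bellman}:
\begin{equation*}
h_I(\Delta)=\rho\Delta-\frac{\Delta^2}{2}+\inf_{z\ge\Delta}\left\{\kappa_s+\frac{z^2}{2}+\left(\tfrac{1}{\lambda}-\rho\right)z\right\}.
\end{equation*}
The inner objective is a strictly convex quadratic in $z$ with unconstrained minimizer $z^\diamond=\rho-1/\lambda$; applying the constraint $z\ge\Delta$ gives $z^\star(\Delta)=\max\{\Delta,\rho-1/\lambda\}$, which matches the claimed sampling threshold.

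The main obstacle I anticipate is not the algebraic verification itself but justifying (a) that the affine ansatz $v(y)=y/\lambda$ is the unique admissible solution (so that the derived $\theta^\star$ and $z^\star$ indeed correspond to the optimal policy rather than to another fixed point), and (b) that the infimum in $\theta$ is attained so that the ``there exists'' in (ii) and (iii) is meaningful. For (a), I would invoke the uniqueness (up to additive constants) associated with the ACOE-verification theorem underlying Theorem~\ref{thm:integral_bellman}, combined with the normalization $v(0)=0$. For (b), continuity of $h_I$ (inherited from the idle-phase ACOE through a convex quadratic selection) together with the exponential weighting in $\Phi$ ensures that $\Phi$ is continuous on $[0,\infty)$ and blows up or stabilizes at $\infty$ in a controlled manner, guaranteeing that a minimizer exists. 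Conceptually, the whole argument is driven by memorylessness: preemption cannot extract information from the elapsed service age, so the optimal rule is a stationary threshold and the relative value in the busy phase grows linearly in the busy-start AoI at the rate $\mathbb{E}[Y]=1/\lambda$.
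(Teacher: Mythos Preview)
Your algebraic separation $Q(y,\theta;\rho)=y/\lambda+\Phi(\theta;\rho)$ and the derivations of parts (ii) and (iii) from it are correct and match the paper's final step (Appendix~C, Step~4) essentially verbatim. The divergence is in how part~(i) is established.

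You use guess-and-verify: substitute the ansatz $v(y)=y/\lambda$ on the right-hand side of \eqref{eq:busy_integral_bellman}, obtain $y/\lambda+\inf_\theta\Phi$, and then declare consistency via $v(0)=0$. But that step is circular: normalization does not ``force'' $\inf_\theta\Phi(\theta;\rho)=0$; it is a constraint the \emph{true} $\rho$ must satisfy, and you do not show such a $\rho$ exists. To close the argument you fall back on ACOE uniqueness, which Theorem~\ref{thm:integral_bellman} (a verification theorem) does not supply. So as written there is a genuine gap: you need either an independent uniqueness theorem for the coupled integral ACOE, or a direct argument that some $\rho$ makes $\inf_\theta\Phi(\theta;\rho)=0$ (a one-dimensional root problem you do not address).

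The paper avoids this entirely. It does \emph{not} guess-and-verify; instead it takes the solution $(\rho,h_I,v)$ already furnished by Theorem~\ref{thm:integral_bellman}, sets $u(y)\triangleq v(y)-y/\lambda$, and shows (Lemma~3) that $u$ satisfies a \emph{translation-invariant} fixed-point equation $u(y)=\inf_{\tau\ge0}\{G(\tau)+e^{-\lambda\tau}u(y+\tau)\}$. It then runs an $\varepsilon$-optimal ``sandwich'' comparison between $u(\cdot)$ and its shift $u(\cdot+c)$, iterates, and uses a linear-growth bound (Lemma~4: $v(y)\le y\mathbb{E}[Y]+C_0$) together with monotonicity of $v$ (Lemma~5) to kill the tail term $e^{-\lambda S_n}|u(y_n+c)-u(y_n)|$. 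This forces $u(y+c)=u(y)$ for all $c\ge0$, hence $u$ is constant, hence $u\equiv0$ by normalization. The trade-off: your route is shorter and conceptually transparent (memorylessness $\Rightarrow$ separation) but leans on an external uniqueness result; the paper's route is heavier machinery (three auxiliary lemmas and an iterated $\varepsilon$-argument) but is fully self-contained and never needs uniqueness or existence of a consistent $\rho$.
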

\begin{proof}
    The proof is given in \cite[Appendix C]{aiminpreemption}.
\end{proof}
\begin{remark}
In \cite{arafa2019preemption}, the optimal threshold structure is postulated on heuristic grounds. Theorem~\ref{the1} rigorously establishes this structure and proves its optimality for exponential delays.
\end{remark}
		
		\section{Numerical Solutions and Simulation Results}\label{sectionVIII}
        \subsection{Policy Iteration with Heavy-Tail Acceleration}
For general service-time distributions, the optimal policy does not admit a closed-form solution; hence we compute the policy numerically using an average-cost policy iteration algorithm (see \cite[Appendix E]{aiminpreemption}). We evaluate the resulting scheme under two heavy-tailed service-time models: Pareto (Lomax) and log-normal.

\textbf{Policy evaluation equation}. Fix a stationary policy $\pi=(z_\pi,\theta_\pi)$, where $\theta_\pi(y)$ is the busy-phase threshold and
$z_\pi(\Delta)\ge \Delta$ is the post-completion waiting target.
Let $w_\pi(t)\triangleq z_\pi(t)-t\ \ge 0$. 
Under $\pi$, the idle-phase ACOE reduces to
\begin{equation}
h_I^\pi(t)=\kappa_s+v^\pi(z_\pi(t))+t\,w_\pi(t)+\tfrac12 w_\pi(t)^2-\rho^\pi w_\pi(t),
\end{equation}
obtained by substituting $z_\pi(t)=t+w_\pi(t)$ into \eqref{eq:idle_integral_bellman}.
Plugging $h_I^\pi$ into the busy-phase ACOE $v^\pi(y)=Q(y,\theta_\pi(y);\rho^\pi)$ and grouping terms yields the Poisson equation form:
\begin{equation}\label{eq:poisson}
v^\pi(y)=r_\pi(y)-\rho^\pi g_\pi(y) + (P_\pi v^\pi)(y),\qquad v^\pi(0)=0,
\end{equation}
where the definitions of $g_\pi(y)$, $r_{\pi}(y)$, and $(P_\pi v^{\pi})(y)$ are given in \cite[Appendix E]{aiminpreemption}.

\textbf{Discretization and heavy-tailed acceleration.}
Discretizing $y$ on a uniform grid $y_i=i\,dt$ turns \eqref{eq:poisson} into a sparse linear system.
At the truncation boundary, strict on-grid transitions may leave the terminal grid state without feasible positive actions. We therefore omit the boundary equation and close the system using a far-field slope constraint motivated by the linear growth of the relative value function (see \cite[Lemma 4]{aiminpreemption}), e.g., $v(y_M)-v(y_{M-1})=s\,dt$.
Heavy tails also make action truncation challenging: enforcing $\bar F_Y(\theta_{\max})\le\varepsilon$ can yield very large
$\theta_{\max}$, making uniform $\theta$-grids prohibitive. We therefore (i) use a hybrid action grid that is uniform on
$[0,\theta_{\mathrm{fine}}]$ and log-spaced beyond $\theta_{\mathrm{fine}}$, and (ii) apply a far-field linear closure
$v(y)\approx v(y_{\mathrm{cut}})+s(y-y_{\mathrm{cut}})$ for $y\ge y_{\mathrm{cut}}$, enabling stable evaluation of $v(y+\theta)$
and analytic tail handling in the idle-phase envelope minimization. We set the far-field slope $s\approx \E[Y]$ inspired by \cite[Lemma 4]{aiminpreemption}. 
        \subsection{Simulation Results}
        \begin{table}[h]
\centering
\small
\setlength{\tabcolsep}{6pt}
\renewcommand{\arraystretch}{1.15}
\caption{Average cost (lower is better). Parentheses show benchmark cost/\textsc{Tailor} cost. $\kappa_s=1$.}
\label{Table:simulation}
\begin{tabular}{lccc}
\toprule
Params. & \textsc{Tailor} (\textbf{Prop.}) & AoI-NP \cite{sun2017update} & ZW-NP\\
\midrule
$Y_P$, $\kappa_p{=}1$ &
\textbf{2.06} & 3.73 {\scriptsize\textcolor{gray}{(1.81$\times$)}} & 6.35 {\scriptsize\textcolor{gray}{(3.08$\times$)}} \\
$Y_P$, $\kappa_p{=}5$ &
\textbf{2.35} & 3.73 {\scriptsize\textcolor{gray}{(1.59$\times$)}} & 6.35 {\scriptsize\textcolor{gray}{(2.70$\times$)}} \\
$Y_{L_1}$, $\kappa_p{=}1$ &
\textbf{1.99} & 16.0 {\scriptsize{\textcolor{gray}{(8.04$\times$)}}} & 56.8 {\scriptsize\textcolor{gray}{(28.5$\times$)}} \\
$Y_{L_2}$, $\kappa_p{=}1$ &
\textbf{1.77} & 53.5 {\scriptsize\hiRatio{\textbf{(30.2$\times$)}}} & 524 {\scriptsize\textcolor{gray}{(296$\times$)}} \\
\bottomrule
\end{tabular}
\end{table}
        
		\vspace{-0.5em}
        In this section, we compare our ACOE-optimal policy (\textsc{Tailor}: \textbf{TAIL}-aware \textbf{O}ptimal p\textbf{R}eemption) with the following benchmarks. 
 \textsc{Tailor} exploits residual-life variability via an optimal-stopping preemption rule.
\begin{itemize}
    \item \textbf{Zero-wait sampling without preemption (ZW-NP):} Sample a new update immediately upon each delivery.
    \item \textbf{AoI-optimal sampling without preemption (AoI-NP) \cite{sun2017update}:} Apply the optimal no-preemption sampling rule therein.
\end{itemize}

We consider three heavy-tailed service-time distributions in Table~\ref{Table:simulation}: $Y_P\sim\mathrm{Lomax}(\sigma=1,\alpha=2.1)$ (Pareto II), $Y_{L_1}\sim\mathrm{LogNormal}(\mu=-1.31,\sigma^2=4)$, and $Y_{L_2}\sim\mathrm{LogNormal}(\mu=-2.31,\sigma^2=6)$,
where $\mathrm{Lomax}(\sigma{=}1,\,\alpha{=}2.1)$ has scale $\sigma$ and shape $\alpha$, and
$\ln Y\sim\mathcal N(\mu,\,\sigma^2)$ for the log-normal case. $Y_{L_1}$ and $Y_{L_2}$ share the same mean ($\mathbb{E}[Y]=2$), but $Y_{L_2}$ has a larger variance.

        Table~\ref{Table:simulation} demonstrates the advantage of our proposed \textsc{Tailor} under heavy-tailed service-time distributions.
In the higher-variance log-normal case $Y_{L_2}$, \textsc{Tailor} achieves an average cost of $1.77$, while AoI-NP and ZW-NP increase to $53.5$ and $524$, which are {30.2$\times$} and {296$\times$} larger, respectively.
Notably, higher variance \emph{hurts} both no-preemption benchmarks yet \emph{benefits} \textsc{Tailor} (from $1.99$ down to $1.77$), indicating a variance-driven \emph{reversal} induced by optimal preemption.
Mechanistically, preemption \emph{clips} extreme service times, yielding a lighter-tailed \emph{effective} completion-time distribution.

    \clearpage
	\bibliographystyle{IEEEtran}
	\bibliography{reference}
\clearpage
    \appendices
\normalsize
\section{Proof of Lemma \ref{lemma:infgen}}
\label{appendix:infgen}
    In the idle mode, the age increases deterministically at unit rate and $b=0$. Hence, when $X(t)=(\Delta,0,I)$, after $dt$ time units without sampling, 
    \begin{equation}
    h(X(t+dt))=h(\Delta+dt,0,I).
    \end{equation}
    Therefore, 
    \begin{equation}
    \begin{split}
    &\lim_{dt\downarrow 0}\frac{\mathbb E[h(X(t+dt))\mid X(t)=(\Delta,0,I)]-h(\Delta,0,I)}{dt} \\
    &\quad =\frac{dh_I(\Delta)}{d\Delta}.
    \end{split}
    \end{equation}
    
    In the busy mode, both $\Delta$ and $b$ increase at unit rate until completion. Over $dt$, the completion jump occurs with probability $\lambda(b)\,dt+o(dt)$, and otherwise the state drifts to $(\Delta+dt,b+dt,B)$. This yields
    \begin{equation}
    \begin{split}
    &\mathbb{E}[h(X(t+dt))\mid X(t)=(\Delta,b,B)] \\
    &\quad = (1-\lambda(b)dt+o(dt))h(\Delta+dt,b+dt,B) \\
    &\qquad + (\lambda(b)dt+o(dt))h(b,0,I).
    \end{split}
    \end{equation}
    We can rearrange this expression to finally write
    \begin{equation}
    \begin{split}
    &\lim_{dt\downarrow 0}\frac{\mathbb E[h(X(t+dt))\mid X(t)=(\Delta,b,B)]-h(\Delta,b,B)}{dt} \\
    &=\partial_\Delta h_B(\Delta,b)+\partial_b h_B(\Delta,b) +\lambda(b)\big(h_I(b)-h_B(\Delta,b)\big).
    \end{split}
    \end{equation}

\section{{Proof of Theorem~\ref{thm:integral_bellman}}}\label{app:integral_bellman_proof}
\subsection{Useful Lemmas}
\begin{lemma}[Idle-phase reduction to deterministic thresholds]\label{lem:idle_os_equiv}
Fix an idle-start state $(\Delta,0,I)$. In idle mode, the only admissible impulse is sampling.
Let $\tau\in[0,\infty]$ be the (possibly history-dependent) first sampling time after the idle start, with $\tau=\infty$
meaning ``never sample''. Then the one-step objective (relative running cost up to $\tau$ plus continuation bias) is
\begin{equation}\label{eq:idle_os_decomp}
\begin{aligned}
&Q_I(\Delta,\tau;\rho)\triangleq\\
&\mathbb{E}\!\left[\int_{0}^{\tau}(\Delta+t-\rho)\,dt+\kappa_s+v(\Delta+\tau)\mathbf 1_{\{\tau<\infty\}}\right].
\end{aligned}
\end{equation}

Moreover:
\begin{enumerate}
\item[\textup{(i)}] Any admissible \emph{deterministic} within-idle rule is equivalent to a deterministic waiting time.
\item[\textup{(ii)}] For any (possibly randomized) $\tau$, $Q_I(\Delta,\tau;\rho)\ge \inf_{w\ge0} Q_I(\Delta,w;\rho)$; hence
randomization cannot improve upon deterministic waiting.
\item[\textup{(iii)}] For deterministic $w\ge0$,
\begin{equation}
Q_I(\Delta,w;\rho)=\rho\Delta-\frac{\Delta^2}{2}+\kappa_s+v(\Delta+w)+\frac{z^2}{2}-\rho z,
\end{equation}
so $\inf_{w\ge0}Q_I(\Delta,w;\rho)$ is equivalent to the infimum over $z\ge\Delta$ in \eqref{eq:idle_integral_bellman}.
\end{enumerate}
\end{lemma}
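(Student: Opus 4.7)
The strategy is to exploit the fact that in the idle phase the PDMP dynamics are purely deterministic: no service completion can occur (since $m=I$), the only admissible impulse is sampling, and between impulses $(\Delta(t),b(t))=(\Delta+t,0)$ drifts deterministically at unit rate. Consequently the natural filtration restricted to the idle phase encodes only the passage of time, and the control problem collapses to choosing a single scalar waiting time.

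To justify the representation \eqref{eq:idle_os_decomp}, I would invoke the integral dynamic programming principle at the first sampling time $\tau$: the accumulated relative running cost on $[0,\tau)$ equals $\int_0^\tau(\Delta+t-\rho)\,dt$; if $\tau<\infty$, the sampling impulse incurs $\kappa_s$ and the state transitions via $\mathcal M_s$ to the busy-start $(\Delta+\tau,0,B)$, whose continuation value is $v(\Delta+\tau)$; if $\tau=\infty$, no impulse cost is paid. Taking expectations yields \eqref{eq:idle_os_decomp}. Claim (i) then follows immediately: since the idle-phase trajectory is a deterministic function of elapsed time, every $\{\mathcal F_t\}$-adapted, non-randomized stopping rule must reduce to a deterministic constant $w\in[0,\infty]$.

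For claim (ii), for any randomized $\tau$ with (possibly defective) law $\mu$ on $[0,\infty]$ (realized on the usual augmented filtration), Fubini's theorem gives
\begin{equation}
Q_I(\Delta,\tau;\rho)=\int_{[0,\infty]}Q_I(\Delta,w;\rho)\,\mu(dw)\ \ge\ \inf_{w\ge 0}Q_I(\Delta,w;\rho),
\end{equation}
so randomization cannot improve upon the best deterministic wait. For claim (iii), direct evaluation $\int_0^w(\Delta+t-\rho)\,dt=(\Delta-\rho)w+\tfrac{w^2}{2}$ together with the substitution $z=\Delta+w$ reorganizes $Q_I(\Delta,w;\rho)$ into the stated form, and the bijection $w\ge 0\Leftrightarrow z\ge\Delta$ shows that the two infima in (iii) and \eqref{eq:idle_integral_bellman} differ only by the $\Delta$-dependent constant $\rho\Delta-\tfrac{\Delta^2}{2}$, which factors outside the infimum.

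The main technical care I anticipate is the handling of $\tau=\infty$ (equivalently $z=\infty$) in the infima. Verifying that this choice cannot be optimal relies on the standing integrability $\mathbb E[Y^2]<\infty$ together with the linear-growth bound on $v$ used elsewhere in the paper (cf.\ \cite[Lemma 4]{aiminpreemption}): the quadratic term $\tfrac{z^2}{2}$ in the envelope dominates $\rho z$ and the linear-growth $v(z)$ as $z\to\infty$, so the infimum is attained at a finite $w^\star$ and the indicator $\mathbf 1_{\{\tau<\infty\}}$ does not vanish at the optimum. A secondary subtlety is confirming that the Fubini interchange in (ii) is licit, which follows once $|Q_I(\Delta,w;\rho)|$ is bounded below uniformly in $w$ on the relevant support of $\mu$.
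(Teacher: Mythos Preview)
Your proposal is correct and follows essentially the same route as the paper: exploit that the idle-phase evolution is purely deterministic so that any adapted non-randomized rule reduces to a constant wait (part (i)), express the randomized objective as an average of deterministic values to obtain the lower bound (part (ii)), and evaluate the integral with the substitution $z=\Delta+w$ (part (iii)). Your added remarks on ruling out $\tau=\infty$ via the quadratic-versus-linear growth comparison and on justifying the Fubini interchange go slightly beyond what the paper records, but they do not change the argument.
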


\begin{proof}
On $[0,\tau)$ there is no jump and $\Delta(t)=\Delta+t$ deterministically, so \eqref{eq:idle_os_decomp} follows.

(i) Within an idle period there is no exogenous randomness; the within-idle observation history up to time $t$ is the same
deterministic trajectory for all sample paths. Hence any deterministic admissible rule must prescribe the same decision at
each $t$, and therefore selects a deterministic sampling time $w\in[0,\infty]$; setting $z=\Delta+w$ gives the threshold
form.

(ii) Since the within-idle evolution is deterministic, $Q_I(\Delta,\tau;\rho)=\mathbb E[g(\tau)]$ for a deterministic
function $g(\cdot)$, hence it is a convex combination of deterministic values and is at least $\inf_{w\ge0}g(w)$.

(iii) For deterministic $w$,
\begin{equation}
\int_0^w(\Delta+t-\rho)\,dt=(\Delta-\rho)w+\frac{w^2}{2},
\end{equation}
and substituting $z=\Delta+w$ yields the expression.
\end{proof}


\begin{lemma}[Busy-phase reduction to deterministic thresholds]\label{lem:busy_os_equiv}
Fix a busy-start state $(y,0,B)$. In busy mode, the only admissible impulse is preemption.
Let $Y\ge 0$ denote the completion time of the current service attempt. Let $\tau$ be the first preemption time within the \emph{current} attempt, with the convention
$\tau=\infty$ if no preemption occurs before completion. Assume $\tau$ is a stopping time with respect to the natural
within-attempt filtration generated by the completion observation $\{\mathbf 1_{\{Y\le t\}}:t\ge0\}$. Define $T\triangleq \min\{Y,\tau\}$.
Then the one-step objective (relative running cost up to $T$ plus continuation bias) is
\begin{equation}\label{eq:busy_os_decomp}
\begin{aligned}
Q(y,\tau;\rho)\triangleq
\mathbb{E}\!\bigg[\int_{0}^{T}(\Delta(t)-\rho)\,dt
+ h_I(Y)\mathbf{1}_{\{Y\le\tau\}}\\
+ (\kappa_p+v(y+\tau))\mathbf{1}_{\{Y>\tau\}}\bigg],
\end{aligned}
\end{equation}
where on $\{t<Y\}$ the state evolves deterministically as $\Delta(t)=y+t$, $b(t)=t$, $m(t)=B$. Moreover, for any admissible \emph{deterministic} within-attempt rule,
there exists a deterministic threshold $\theta\in[0,\infty]$ such that within the current attempt
$
\min\{Y,\tau\}=\min\{Y,\theta\},\qquad
\mathbf 1_{\{Y\le\tau\}}=\mathbf 1_{\{Y\le\theta\}},\qquad
\mathbf 1_{\{Y>\tau\}}=\mathbf 1_{\{Y>\theta\}},
$
and hence $Q(y,\tau;\rho)=Q(y,\theta;\rho)$. Finally, for any deterministic $\theta\in[0,\infty]$, evaluating \eqref{eq:busy_os_decomp} yields \eqref{eq24}.
\end{lemma}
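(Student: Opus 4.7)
The plan is to establish the three claims of Lemma~\ref{lem:busy_os_equiv} sequentially: (a) derive the cost decomposition \eqref{eq:busy_os_decomp} via one-step dynamic programming on the controlled PDMP, (b) reduce deterministic within-attempt rules to deterministic thresholds using the degenerate information structure of the completion-only filtration, and (c) carry out the explicit expectation computation to obtain \eqref{eq24}.

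For (a), I would observe that on the random interval $[0,T)$ with $T=\min\{Y,\tau\}$, no controlled or completion jump of the PDMP occurs, so the uncontrolled flow yields $\Delta(t)=y+t$, $b(t)=t$, $m(t)=B$, and the relative running cost contribution is $\int_0^T (y+t-\rho)\,dt$. At time $T$, exactly one of two events triggers a jump: if $Y\le\tau$, the completion map \eqref{eq:Mc} sends the state to $(Y,0,I)$ whose relative continuation value is $h_I(Y)$; if $Y>\tau$, the preemption operator \eqref{eq:Mp} is applied at cost $\kappa_p$, sending the state to $(y+\tau,0,B)$ with continuation value $v(y+\tau)$. Assembling these contributions under expectation yields \eqref{eq:busy_os_decomp}.

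For (b), which I expect to be the main conceptual hurdle, the key observation is that on the event $\{Y>t\}$ the sigma algebra $\mathcal G_t=\sigma(\mathbf 1_{\{Y\le s\}}:s\le t)$ is trivial, since every indicator with $s\le t$ vanishes there. A deterministic admissible rule therefore prescribes an identical preemption decision on all sample paths where $Y$ has not yet occurred, so the preemption time restricted to $\{\tau<Y\}$ must equal a constant $\theta\in[0,\infty]$. On the complementary event $\{Y\le\theta\}$, completion precedes $\tau$, hence $\min\{Y,\tau\}=Y=\min\{Y,\theta\}$ and the event $\{Y\le\tau\}$ holds; on $\{Y>\theta\}$, preemption fires exactly at $\theta$ by construction, giving $\min\{Y,\tau\}=\theta=\min\{Y,\theta\}$ and the event $\{Y>\tau\}$. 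Thus $T$ and the triggering indicators coincide almost surely between the $\tau$ and $\theta$ formulations, yielding $Q(y,\tau;\rho)=Q(y,\theta;\rho)$.

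For (c), the four terms in \eqref{eq:busy_os_decomp} evaluate by standard identities when $\theta$ is deterministic. The constant part $(y-\rho)$ of the running-cost integrand contributes $(y-\rho)\,\mathbb E[\min\{Y,\theta\}]=(y-\rho)A(\theta)$ via the tail formula $A(\theta)=\int_0^\theta \bar F_Y(t)\,dt$. The linear part $t$ yields $\mathbb E\bigl[\int_0^{\min\{Y,\theta\}} t\,dt\bigr]=\int_0^\theta t\,\bar F_Y(t)\,dt=J_1(\theta)$ by Fubini, after rewriting the inner integral as $\int_0^\theta t\,\mathbf 1_{\{t<Y\}}\,dt$. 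The completion term reduces to $\int_0^\theta h_I(t)\,dF_Y(t)$ by the definition of the distribution of $Y$, and the preemption term equals $\bar F_Y(\theta)\bigl(\kappa_p+v(y+\theta)\bigr)$ since the summand is deterministic. Combining these pieces gives \eqref{eq24}, completing the plan.
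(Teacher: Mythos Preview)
Your proposal is correct and follows essentially the same route as the paper's proof: the paper likewise (a) reads off \eqref{eq:busy_os_decomp} from the deterministic flow up to $T$ and the two jump outcomes, (b) uses triviality of the survival-conditioned information to define a deterministic action map $a(t)$ and set $\theta=\inf\{t:a(t)=\textsf{preempt}\}$, and (c) evaluates via the identities $\mathbb E[\min\{Y,\theta\}]=\int_0^\theta\bar F_Y(t)\,dt$ and $\mathbb E[\int_0^{\min\{Y,\theta\}}t\,dt]=\int_0^\theta t\,\bar F_Y(t)\,dt$. The only cosmetic difference is that in step~(b) you jump directly to ``$\tau$ restricted to $\{\tau<Y\}$ is a constant $\theta$'' whereas the paper constructs $\theta$ explicitly as the infimum of the preemption set; making that construction explicit would tighten your write-up but does not change the argument.
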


\begin{proof}
Let $T=\min\{Y,\tau\}$. Up to $T$ there is no jump, hence on $[0,T)$,
$(\Delta(t),b(t),m(t))=(y+t,t,B)$. At $T$, either completion occurs first ($Y\le\tau$) and the system jumps to idle with
bias $h_I(Y)$, or preemption occurs first ($Y>\tau$), incurring cost $\kappa_p$ and restarting busy with AoI $y+\tau$ and
bias $v(y+\tau)$. This gives \eqref{eq:busy_os_decomp}.

Consider any deterministic admissible rule within the current attempt.
Fix $t\ge0$ and restrict to the survival event $\{Y>t\}$. On $\{Y>t\}$, the within-attempt observation history up to time $t$
contains only the deterministic trajectory $(y+s,s,B)_{0\le s\le t}$ and the information ``no completion yet'';
consequently, the rule must prescribe a single action at age $t$ conditional on survival, denoted
$a(t)\in\{\textsf{continue},\textsf{preempt}\}$.
Define the deterministic threshold
\begin{equation}
\theta \triangleq \inf\{t\ge0:\ a(t)=\textsf{preempt}\}\in[0,\infty].
\end{equation}
If $Y\le\theta$, then $a(t)=\textsf{continue}$ for all $t<\theta$, so no preemption occurs before completion; thus
$Y\le\tau$ and the completion indicators match. If $Y>\theta$, then the rule preempts at age $\theta$, hence $\tau=\theta$ on
$\{Y>\theta\}$, so the preemption indicators match. Therefore $\min\{Y,\tau\}=\min\{Y,\theta\}$ and $Q(y,\tau;\rho)=Q(y,\theta;\rho)$ within the current attempt.

Fix deterministic $\theta\in[0,\infty)$ and let $T=\min\{Y,\theta\}$. Using
$T=\int_0^\theta \mathbf 1_{\{Y>t\}}\,dt$, we have
\begin{equation}\label{eq41}
\begin{aligned}
\mathbb E[T]&=\int_0^\theta \bar F_Y(t)\,dt=A(\theta),\\
\mathbb E\!\left[\int_0^T t\,dt\right]&=\int_0^\theta t\,\bar F_Y(t)\,dt=J_1(\theta).
\end{aligned}
\end{equation}
Also, by the Lebesgue--Stieltjes integral for $F_Y$ and completion-precedence,
\begin{equation}\label{eq42}
\begin{aligned}
\mathbb E[h_I(Y)\mathbf 1_{\{Y\le\theta\}}]&=\int_{[0,\theta]}h_I(t)\,dF_Y(t),
\\
\mathbb P(Y>\theta)&=\bar F_Y(\theta).
\end{aligned}
\end{equation}
Substituting \eqref{eq41} and \eqref{eq42} into \eqref{eq:busy_os_decomp} yields \eqref{eq24}.
\end{proof}



\subsection{Formal Proof}
For any policy $\pi$ and any decision epoch $S_N$,
\begin{equation}
\begin{aligned}
&\int_0^{S_N}\Delta(t)\,dt+\sum_{\tau_i\le S_N}K(\cdot)
=
\\&\rho\,S_N+\left(\int_0^{S_N}(\Delta(t)-\rho)\,dt+\sum_{\tau_i\le S_N}K(\cdot)\right),
\end{aligned}
\end{equation}
so controlling the centered growth rate of $\int_0^{S_N}(\Delta-\rho)\,dt+\sum K$ is equivalent to controlling the average
cost rate $\rho^\pi$.

Fix any $\pi\in\Pi$ and let $0=S_0<S_1<S_2<\cdots$ be the successive event times under $\pi$, with $X_N=X(S_N)$.
At each epoch $S_N$, the system is either in idle $(\Delta_N,0,I)$ or in busy $(y_N,0,B)$.
In both cases, using the ACOE plus Lemmas~\ref{lem:idle_os_equiv}--\ref{lem:busy_os_equiv} yields the Bellman inequality
\begin{equation}\label{eq:bellman_ineq_epochs}
\begin{aligned}
&h(X_N)\le \\&\mathbb E^\pi\!\left[\int_{S_N}^{S_{N+1}}(\Delta(t)-\rho)\,dt+K_N+h(X_{N+1})\ \Big|\ \mathcal F_{S_N}\right],
\end{aligned}
\end{equation}
where $K_N\in\{0,\kappa_s,\kappa_p\}$ is the impulse cost incurred at time $S_{N+1}$. Taking expectations in \eqref{eq:bellman_ineq_epochs} and summing $N=0,\dots,n-1$ yields
\[
\mathbb E^\pi\!\left[\int_{0}^{S_n}(\Delta(t)-\rho)\,dt+\sum_{N=0}^{n-1}K_N\right]
\ge h(X_0)-\mathbb E^\pi[h(X_n)].
\]
Equivalently,
\begin{equation}
\begin{aligned}
\mathbb E^\pi\!\left[\int_{0}^{S_n}\Delta(t)\,dt+\sum_{\tau_i\le S_n}K(\cdot)\right]
\ge \\\rho\,\mathbb E^\pi[S_n]+h(X_0)-\mathbb E^\pi[h(X_n)].
\end{aligned}
\end{equation}
Dividing by $\mathbb E^\pi[S_n]$ and letting $n\to\infty$, we have
\begin{equation}
    \rho^\pi\ge \rho,\qquad \pi\in\Pi.
\end{equation}

Let $\pi^\star$ be the stationary policy induced by measurable minimizers $z^\star(\cdot)$ and $\theta^\star(\cdot)$ in
\eqref{eq:zstar_def}--\eqref{eq:taustar_def}. Under $\pi^\star$, the chosen actions attain the infima in
\eqref{eq:idle_integral_bellman} and \eqref{eq:busy_integral_bellman}, so \eqref{eq:bellman_ineq_epochs} holds with equality
at every epoch, implying $\rho^{\pi^\star}=\rho$. 

\section{Proof of Theorem \ref{the1}}\label{proof:the1}
\subsection{Useful Lemmas}
\begin{lemma}[Exponential service: translation-invariant stopping equation]\label{lemma3}
Let $Y\sim\mathrm{Exp}(\lambda)$ with $\lambda>0$ and define $u(y)\triangleq v(y)-y/\lambda$.
Then the busy-phase ACOE is equivalent to the fixed-point equation
\begin{equation}\label{eq:u_fp}
u(y)=\inf_{\tau\ge0}\Big\{G(\tau)+e^{-\lambda\tau}u(y+\tau)\Big\},\qquad y\ge0,
\end{equation}
where
\begin{equation}
G(\tau)=\int_{0}^{\tau} e^{-\lambda t}\big(t-\rho+\lambda h_I(t)\big)\,dt
+e^{-\lambda\tau}\Big(\kappa_p+\frac{\tau}{\lambda}\Big).
\end{equation}
Moreover, for any $c\ge0$, the shifted function $u_c(y)\triangleq u(y+c)$ satisfies the same equation.
\end{lemma}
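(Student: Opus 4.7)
The plan is to specialize the busy-phase integral ACOE \eqref{eq:busy_integral_bellman} to $Y\sim\mathrm{Exp}(\lambda)$ and then apply the affine change of variable $v(y)=u(y)+y/\lambda$, which subtracts off the linear drift already identified by the no-preemption reduction $v(y)=y\,\E[Y]=y/\lambda$. Concretely, I would substitute $\bar F_Y(t)=e^{-\lambda t}$, $dF_Y(t)=\lambda e^{-\lambda t}\,dt$, $A(\theta)=(1-e^{-\lambda\theta})/\lambda$, and $J_1(\theta)=\int_0^\theta t\,e^{-\lambda t}dt$ into the definition of $Q(y,\theta;\rho)$ in \eqref{eq24}. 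On the right-hand side, the $y$-dependence is confined to the two terms $(y-\rho)A(\theta)$ and $\bar F_Y(\theta)v(y+\theta)=e^{-\lambda\theta}\bigl[u(y+\theta)+(y+\theta)/\lambda\bigr]$, which keeps the bookkeeping local.

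The crux is to verify that the $y$-linear terms cancel after the substitution. Collecting the coefficient of $y$ from the LHS ($-y/\lambda$, obtained by moving $y/\lambda$ across) and from the two $y$-containing pieces on the RHS ($yA(\theta)$ and $e^{-\lambda\theta}y/\lambda$) yields $y\bigl[A(\theta)+e^{-\lambda\theta}/\lambda-1/\lambda\bigr]$, and a one-line computation using $A(\theta)=(1-e^{-\lambda\theta})/\lambda$ shows the bracket is identically zero. This cancellation is the precise analytical imprint of memorylessness: the accumulated busy-start AoI influences $v$ only through the deterministic drift $y/\lambda$, which the ansatz strips away. The residual equation reads
\begin{equation*}
u(y)=\inf_{\theta\ge0}\Bigl\{J_1(\theta)-\rho A(\theta)+\lambda\!\int_0^\theta\! h_I(t)e^{-\lambda t}dt+e^{-\lambda\theta}(\kappa_p+\theta/\lambda)+e^{-\lambda\theta}u(y+\theta)\Bigr\},
\end{equation*}
and grouping the first three non-$u$ summands as $\int_0^\theta e^{-\lambda t}\bigl(t-\rho+\lambda h_I(t)\bigr)dt$ identifies the bracket with $G(\theta)+e^{-\lambda\theta}u(y+\theta)$, giving \eqref{eq:u_fp}.

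The translation-invariance claim is then immediate: in \eqref{eq:u_fp}, $y$ appears only through the argument of $u(y+\tau)$, so substituting $y\mapsto y+c$ turns $u(y+\tau)$ into $u_c(y+\tau)$ while leaving $G(\tau)$ and $e^{-\lambda\tau}$ untouched, so $u_c$ satisfies the same fixed-point equation. The main obstacle is genuinely just the $y$-cancellation bookkeeping: once the three $y$-coefficients are correctly assembled, the identity collapses in one line. No smoothness or functional-analytic subtlety arises, since the manipulation is a finite algebraic rearrangement at each $(y,\theta)$, and integrability of $h_I(t)e^{-\lambda t}$ is ensured by the moment assumption \eqref{eq1} together with the growth properties of $h_I$ implicit in Theorem~\ref{thm:integral_bellman}.
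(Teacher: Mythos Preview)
Your proposal is correct and follows essentially the same route as the paper's own proof sketch: specialize $Q(y,\theta;\rho)$ to $Y\sim\mathrm{Exp}(\lambda)$, substitute $v=u+y/\lambda$, and verify that the $y$-coefficient $A(\theta)+e^{-\lambda\theta}/\lambda-1/\lambda$ vanishes (the paper phrases this same cancellation via $1-e^{-\lambda\tau}=\int_0^\tau \lambda e^{-\lambda t}\,dt$), after which the remainder is exactly $G(\theta)+e^{-\lambda\theta}u(y+\theta)$ and translation invariance is immediate from the fact that $y$ enters only through $u(y+\tau)$.
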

\begin{proofsketch}
    Starting from the busy-phase ACOE \eqref{eq:busy_integral_bellman}, set $u(y)\triangleq v(y)-y/\lambda$ and
substitute $v(y)=u(y)+y/\lambda$ and $v(y+\tau)=u(y+\tau)+(y+\tau)/\lambda$. After regrouping terms, the only
$y$-dependent remainder is $\big(e^{-\lambda\tau}-1\big)\frac{y}{\lambda}$, which cancels with the linear shift
in the definition of $u$ by using $1-e^{-\lambda\tau}=\int_0^\tau \lambda e^{-\lambda t}\,dt$. This yields \eqref{eq:u_fp} with the stated $G(\tau)$.
Finally, since the right-hand side of \eqref{eq:u_fp} depends on $y$ only through $u(y+\tau)$, any shift
$u_c(y)=u(y+c)$ satisfies the same equation.
\end{proofsketch}

\begin{lemma}[Linear-growth upper bound of $v$]\label{lem:linear_growth}For any distributions $Y\sim F_Y$,
there exists a constant $C_0\in\mathbb{R}$ independent of $y$ such that
\begin{equation}\label{eq:linear_growth_bound}
v(y)\le y\,\mathbb{E}[Y]+C_0,\qquad \forall y\ge 0.
\end{equation}
\end{lemma}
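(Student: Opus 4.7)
The plan is to evaluate the busy-phase ACOE \eqref{eq:busy_integral_bellman} at the suboptimal stopping rule $\theta=\infty$ (never preempt the current busy attempt), which is admissible by Lemma~\ref{lem:busy_os_equiv}. Substituting $A(\infty)=\mathbb{E}[Y]$, $J_1(\infty)=\tfrac{1}{2}\mathbb{E}[Y^2]$, $\bar F_Y(\infty)=0$, and $\int_{0}^{\infty} h_I(t)\,dF_Y(t)=\mathbb{E}[h_I(Y)]$ into \eqref{eq24} gives
\[
v(y)\ \le\ Q(y,\infty;\rho)\ =\ y\,\mathbb{E}[Y] + C_0,\qquad C_0 \triangleq -\rho\,\mathbb{E}[Y] + \tfrac{1}{2}\mathbb{E}[Y^2] + \mathbb{E}[h_I(Y)],
\]
which is exactly the desired linear bound with a $y$-independent constant.

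It then remains to show $C_0<\infty$. Assumption \eqref{eq1} gives $\mathbb{E}[Y],\mathbb{E}[Y^2]<\infty$, and because the zero-wait non-preemptive policy (case study preceding Theorem~\ref{the1}) is admissible we have $\rho\le\rho_{\mathrm{ZW}}=(\tfrac{1}{2}\mathbb{E}[Y^2]+\mathbb{E}[Y]^2+\kappa_s)/\mathbb{E}[Y]<\infty$. Hence the first two summands of $C_0$ are finite, and the task reduces to verifying $\mathbb{E}[h_I(Y)]<\infty$. Taking $z=\Delta$ in the idle ACOE \eqref{eq:idle_integral_bellman} yields the pointwise bound $h_I(\Delta)\le \kappa_s+v(\Delta)$, so it suffices to establish $\mathbb{E}[v(Y)]<\infty$.

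The main obstacle is the apparent circularity in bounding $\mathbb{E}[v(Y)]$ via the very inequality we are trying to prove. I would resolve it by a truncation-and-bootstrap argument: evaluating \eqref{eq:busy_integral_bellman} at a large but finite $\theta_0$ gives the recursion
\[
v(y)\ \le\ A(\theta_0)\,y + B(\theta_0) + \bar F_Y(\theta_0)\,v(y+\theta_0),
\]
where $B(\theta_0)$ is finite (a sum of integrals of a locally bounded $h_I$ over the compact interval $[0,\theta_0]$) and $\bar F_Y(\theta_0)<1$. Iterating $n$ times and invoking a preliminary subexponential growth estimate on $v$ from a vanishing-discount approximation (which guarantees $\bar F_Y(\theta_0)^n\,v(y+n\theta_0)\to 0$) yields a geometric-series bound. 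Sending $\theta_0\to\infty$ then pins the slope to $\mathbb{E}[Y]$ via $A(\theta_0)/F_Y(\theta_0)\to\mathbb{E}[Y]$ and makes the residual constant finite via $\theta_0\,\bar F_Y(\theta_0)\to 0$ (which follows from $\mathbb{E}[Y]<\infty$ by dominated convergence), producing a finite $C_0$ and completing the linear-growth bound.
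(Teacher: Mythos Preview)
Your first paragraph is exactly the paper's proof: evaluate the busy-phase ACOE \eqref{eq:busy_integral_bellman} at the never-preempt choice $\theta=\infty$ and read off $C_0=-\rho\,\mathbb{E}[Y]+\tfrac12\mathbb{E}[Y^2]+\mathbb{E}[h_I(Y)]$; the paper's proof sketch stops at this one-line computation and does not separately verify $C_0<\infty$, implicitly treating finiteness as part of the standing premise that an ACOE solution $(\rho,h_I,v)$ exists.

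Your remaining two paragraphs therefore go beyond what the paper actually proves. The reductions $\rho\le\rho_{\mathrm{ZW}}<\infty$ and $h_I(\Delta)\le\kappa_s+v(\Delta)$ are correct, and the truncation recursion $v(y)\le A(\theta_0)y+B(\theta_0)+\bar F_Y(\theta_0)v(y+\theta_0)$ is a sensible route to close the gap. The one soft spot is that you invoke ``a preliminary subexponential growth estimate on $v$ from a vanishing-discount approximation'' to kill the tail term $\bar F_Y(\theta_0)^n v(y+n\theta_0)$; that auxiliary fact is not established in the paper and would itself need proof, so as written this step is a pointer rather than a complete argument. The core approach, however, matches the paper.
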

\begin{proofsketch}
By optimality of \eqref{eq:busy_integral_bellman} and never preemption case given in \eqref{acoe:busy_np}, we establish the following upper bound:
\begin{equation}
v(y)\le y\mathbb{E}[Y]+\frac{\mathbb{E}[Y^2]}{2}-\rho\mathbb{E}[Y]+\mathbb{E}[h_I(Y)],
\end{equation}
which yields \eqref{eq:linear_growth_bound} with
$C_0=\frac{\mathbb{E}[Y^2]}{2}-\rho\mathbb{E}[Y]+\mathbb{E}[h_I(Y)]$.
\end{proofsketch}
\begin{lemma}[Monotonicity of the $v(y)$]\label{lem:v_monotone}
$v(\cdot)$ is nondecreasing on $\mathbb{R}_+$; in particular, $v(y)\ge 0$ for all $y\ge0$.
\end{lemma}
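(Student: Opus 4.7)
The plan is to establish the monotonicity of $v$ via a sample-path coupling on the busy phase; the nonnegativity conclusion $v(y)\ge 0$ then follows immediately from the normalization $v(0)=0$ by specializing to $y_1=0$. The key structural observation is that in the busy-phase cost $Q(y,\theta;\rho)$ from \eqref{eq24}, the busy-start AoI $y$ enters \emph{only} through the term $(y-\rho)A(\theta)$ and the continuation value $v(y+\theta)$, whereas the completion dynamics, the preemption thresholds, and the idle-return term $\int_0^\theta h_I\,dF_Y$ depend solely on the service age. This $y$-equivariance is exactly what enables a pathwise coupling across two processes with different busy-start AoIs.

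I would fix $y_1\le y_2$, couple two busy-phase processes started at $(y_1,0,B)$ and $(y_2,0,B)$ on a common probability space (shared service clock), and apply the \emph{same} age-based policy $\pi^{\star}_{y_2}$ (optimal from $y_2$) to both copies. Using the busy-phase invariance $\Delta(t)-b(t)\equiv y$ identified in Section~\ref{subsec:manifold}, the AoI gap $y_2-y_1$ is preserved pathwise across every preemption restart until the first successful completion at random time $T$; the preemption count, preemption costs, and post-completion state $(b_T,0,I)$ coincide between the two systems. Consequently, the relative cost of running $\pi^{\star}_{y_2}$ from $y_1$ equals $v(y_2)-(y_2-y_1)\mathbb{E}^{\pi^{\star}_{y_2}}[T]$, and since $\pi^{\star}_{y_2}$ is merely feasible (not necessarily optimal) from busy-start $y_1$, optimality yields $v(y_1)\le v(y_2)-(y_2-y_1)\mathbb{E}[T]\le v(y_2)$.

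To make the unrolling above rigorous, I would iterate the one-step Bellman comparison
\begin{equation*}
v(y_2)-v(y_1)\ \ge\ (y_2-y_1)A(\theta^{\star})+\bar F_Y(\theta^{\star})\bigl(v(y_2+\theta^{\star})-v(y_1+\theta^{\star})\bigr),
\end{equation*}
obtained from \eqref{eq:busy_integral_bellman} by substituting $\theta^{\star}:=\theta^{\star}(y_2)$ as a feasible (sub-optimal) choice in the $y_1$-Bellman and subtracting. The hard part is controlling the tail of this iteration: after $n$ unrollings along the successive preemption thresholds $\{\theta_k^{\star}\}$ prescribed by $\pi^{\star}_{y_2}$, one must show that the residual $\prod_{k=0}^{n-1}\bar F_Y(\theta_k^{\star})\,\bigl(v(y_2^{(n)})-v(y_1^{(n)})\bigr)$ vanishes as $n\to\infty$. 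The tail product decays because $\mathbb{E}[Y]<\infty$ forces a strictly positive completion probability per cycle for any sensible stationary policy (else the average cost would be infinite), while the continuation-value gap is kept bounded uniformly in $n$ by the linear-growth estimate of Lemma~\ref{lem:linear_growth}. A straightforward truncation argument then closes the proof, identifying $\sum_k A(\theta_k^{\star})\prod_{j<k}\bar F_Y(\theta_j^{\star})$ with $\mathbb{E}^{\pi^{\star}_{y_2}}[T]\ge 0$ and concluding $v(y_2)\ge v(y_1)$.
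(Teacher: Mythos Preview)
Your coupling plan in the first two paragraphs is exactly the paper's argument: shift the busy-start AoI by $c=y_2-y_1$, run the two coupled processes under the \emph{same} service-age-based rule on a common service clock, observe that the AoI offset persists until the first delivery $T_{\rm del}$ while impulse costs and post-delivery states coincide, and conclude $J(\pi;y_2)=J(\pi;y_1)+c\,\mathbb{E}[T_{\rm del}]\ge J(\pi;y_1)\ge v(y_1)$. Two small differences worth noting: (i) the paper starts from an \emph{arbitrary} admissible $\pi$ for $y_2$ and takes the infimum at the end, so it never assumes $\pi^\star_{y_2}$ exists; your version implicitly needs attainment or an $\varepsilon$-optimal patch. (ii) The paper works directly with the first-delivery cost functional $J(\pi;y)$, so the pathwise identity is exact and no tail control is needed at all.

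By contrast, your third paragraph---the ``rigorous'' unrolling of the one-step Bellman comparison---is an unnecessary detour and has a gap. To pass to the limit you need the residual $\bigl[\prod_{k<n}\bar F_Y(\theta_k^\star)\bigr]\bigl(v(y_2^{(n)})-v(y_1^{(n)})\bigr)$ to be asymptotically nonnegative. Lemma~\ref{lem:linear_growth} only gives an \emph{upper} bound $v(y)\le y\,\mathbb{E}[Y]+C_0$; to bound the gap from below you also need a lower bound on $v(y_1^{(n)})$, which is precisely what you are trying to prove ($v\ge0$ follows from monotonicity, so invoking it here is circular). Your appeal to ``strictly positive completion probability per cycle'' only yields $\theta_k^\star>0$ pointwise, not a uniform $\theta_{\min}>0$, so geometric decay of the tail product is not guaranteed either (and for service laws with $f_Y(0)=0$, e.g.\ log-normal, $\sum_k F_Y(\theta_k^\star)<\infty$ does not even force $\sum_k\theta_k^\star<\infty$). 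The cleanest fix is simply to drop the iteration: your first two paragraphs already constitute a complete proof once phrased via the first-delivery functional, exactly as the paper does.
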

\begin{proof}
See Appendix \ref{appD}

\end{proof}

\subsection{Formal Proof}

\smallskip
\noindent\emph{Step 1: shift-invariance of the fixed-point equation.}
For any $c\ge0$, define $u_c(y)\triangleq u(y+c)$. Since $G(\tau)$ does not depend on $y$, $u_c$ satisfies the same fixed-point equation \eqref{eq:u_fp}:
\begin{equation}
u_c(y)=\inf_{\tau\ge0}\Big\{G(\tau)+e^{-\lambda \tau}u_c(y+\tau)\Big\}.
\end{equation}

\smallskip
\noindent\emph{Step 2: $\varepsilon$-optimal dynamic-programming comparison (``sandwich'') for $d_c(y)$.}
Fix $c\ge0$ and define $u_c(y)\triangleq u(y+c)$ and $d_c(y)\triangleq u_c(y)-u(y)$.
Fix $y\ge0$ and a total error budget $\varepsilon>0$. We construct a summable error sequence, e.g.
\begin{equation}
\varepsilon_k \triangleq 2^{-(k+1)}\varepsilon,\qquad k=0,1,2,\dots,
\end{equation}
{so that} $\sum_{k=0}^\infty \varepsilon_k=\varepsilon$. We select the actions along an \emph{admissible} $\{\varepsilon_k\}$-optimal policy:
at each visited state $y_k$ (resp.\ $\hat y_k$), choose $\tau_k$ (resp.\ $\hat\tau_k$) to be $\varepsilon_k$-optimal.
This is always possible by the definition of the infimum in \eqref{eq:u_fp}.

\smallskip
\noindent\emph{Upper sandwich.}
Set $y_0=y$. For each $k\ge0$, since $u$ satisfies \eqref{eq:u_fp}, there exists an $\varepsilon_k$-optimal action
$\tau_k\ge0$ (chosen at state $y_k$) such that
\begin{equation}\label{eq:epsk_opt_u}
u(y_k)\;\ge\;G(\tau_k)+e^{-\lambda \tau_k}u(y_k+\tau_k)-\varepsilon_k.
\end{equation}
Using the same $\tau_k$ as a feasible action for $u_c(y_k)$ gives
\begin{equation}\label{eq:feasible_uc_k}
u_c(y_k)\;\le\;G(\tau_k)+e^{-\lambda \tau_k}u_c(y_k+\tau_k).
\end{equation}
Subtracting \eqref{eq:epsk_opt_u} from \eqref{eq:feasible_uc_k} yields
\begin{equation}\label{eq:dc_upper_epsk}
d_c(y_k)\;\le\;e^{-\lambda \tau_k}\,d_c(y_{k+1})+\varepsilon_k,
\qquad y_{k+1}\triangleq y_k+\tau_k.
\end{equation}

\smallskip
\noindent\emph{Lower sandwich.}
Similarly, define $\hat y_0=y$ and for each $k\ge0$ choose an $\varepsilon_k$-optimal action $\hat\tau_k\ge0$
for $u_c(\hat y_k)$ such that
\begin{equation}\label{eq:epsk_opt_uc}
u_c(\hat y_k)\;\ge\;G(\hat\tau_k)+e^{-\lambda \hat\tau_k}u_c(\hat y_k+\hat\tau_k)-\varepsilon_k.
\end{equation}
Using the same $\hat\tau_k$ as a feasible action for $u(\hat y_k)$ gives
\begin{equation}\label{eq:feasible_u_k}
u(\hat y_k)\;\le\;G(\hat\tau_k)+e^{-\lambda \hat\tau_k}u(\hat y_k+\hat\tau_k).
\end{equation}
Subtracting \eqref{eq:feasible_u_k} from \eqref{eq:epsk_opt_uc} yields
\begin{equation}\label{eq:dc_lower_epsk}
d_c(\hat y_k)\;\ge\;e^{-\lambda \hat\tau_k}\,d_c(\hat y_{k+1})-\varepsilon_k,
\qquad \hat y_{k+1}\triangleq \hat y_k+\hat\tau_k.
\end{equation}

\smallskip
\noindent\emph{Step 3: iteration and vanishing of tail terms, then let $\varepsilon\downarrow 0$.}
Iterating \eqref{eq:dc_upper_epsk} for $k=0,1,\dots,n-1$ gives
\begin{equation}\label{eq:iterate_upper_epsk}
d_c(y)\;\le\;e^{-\lambda S_n}\,d_c(y_n)\;+\;\sum_{k=0}^{n-1}e^{-\lambda S_k}\varepsilon_k,
\end{equation}
where
\begin{equation}
    S_n\triangleq\sum_{j=0}^{n-1}\tau_j,\ \ S_0=0.
\end{equation}
Since $e^{-\lambda S_k}\le 1$ and $\sum_{k\ge0}\varepsilon_k=\varepsilon$, we have
\begin{equation}\label{eq:error_bound}
\sum_{k=0}^{n-1}e^{-\lambda S_k}\varepsilon_k\;\le\;\sum_{k=0}^{n-1}\varepsilon_k\;\le\;\varepsilon.
\end{equation}
By construction, the sequence $\{\tau_k\}_{k\ge0}$ is generated by an admissible policy. The same holds for $\left\{\hat{\tau}_k\right\}$ along an admissible policy for $u_c$.
Hence the associated impulse times cannot accumulate in finite time, which implies
\begin{equation}
S_n=\sum_{j=0}^{n-1}\tau_j \uparrow \infty \quad \text{as } n\to\infty,
\end{equation}
and therefore $e^{-\lambda S_n}\to0$.

Next, we control $d_c(y_n)$. By Lemma~\ref{lem:linear_growth} and $\mathbb{E}[Y]=1/\lambda$ (for $Y\sim{\rm Exp}(\lambda)$),
there exists $C_0$ such that $v(y)\le y/\lambda+C_0$, hence $u(y)=v(y)-y/\lambda\le C_0$.
By Lemma~\ref{lem:v_monotone} and $v(0)=0$, we have $v(y)\ge0$ and thus $u(y)\ge -y/\lambda$.
Therefore, for each fixed $c\ge0$,
\begin{equation}\label{eq:dc_linear_bound_final}
\begin{aligned}
|d_c(y)|=\big|u(y+c)-u(y)\big|
&\le \max\Big\{C_0+\frac{y}{\lambda},\ C_0+\frac{y+c}{\lambda}\Big\}
\\&\le C_1(c)+\frac{y}{\lambda},
\end{aligned}
\end{equation}
for some constant $C_1(c)$ independent of $y$. Since $y_n=y+S_n$, we have
\[
|d_c(y_n)|\le C_1(c)+\frac{y+S_n}{\lambda}.
\]
Consequently,
\begin{equation}
\begin{aligned}
e^{-\lambda S_n}|d_c(y_n)|
&\le C_1(c)e^{-\lambda S_n}+\frac{y}{\lambda}e^{-\lambda S_n}+\frac{1}{\lambda}S_ne^{-\lambda S_n}
\\&\;\xrightarrow[n\to\infty]{}\;0,
\end{aligned}
\end{equation}
since $\lim_{s\to\infty} s e^{-\lambda s}=0$.

Letting $n\to\infty$ in \eqref{eq:iterate_upper_epsk} and using \eqref{eq:error_bound} yields
\[
d_c(y)\le \varepsilon.
\]
Since $\varepsilon>0$ is arbitrary, letting $\varepsilon\downarrow0$ gives $d_c(y)\le0$.

Repeating the same iteration argument using \eqref{eq:dc_lower_epsk} yields $d_c(y)\ge -\varepsilon$,
and then letting $\varepsilon\downarrow0$ gives $d_c(y)\ge0$.
Therefore $d_c(y)=0$ for all $y\ge0$, i.e., $u(y+c)=u(y)$ for all $y,c\ge0$.
Hence $u$ is constant on $[0,\infty)$. With the normalization $v(0)=0$, we have $u(0)=0$,
so $u\equiv0$ and thus
\begin{equation}
v(y)=\frac{y}{\lambda},\qquad y\ge0.
\end{equation}

\smallskip
\noindent\emph{Step 4: independence of the optimal preemption threshold from $y$.}
Substituting $v(y)=y/\lambda$ into the exponential busy objective yields
\begin{equation}
Q(y,\tau;\rho)=\frac{y}{\lambda}+G(\tau),
\end{equation}
where $G(\tau)$ does not depend on $y$. Hence $\arg\min_{\tau\ge0}Q(y,\tau;\rho)=\arg\min_{\tau\ge0}G(\tau)$ is independent of $y$, and an optimal preemption rule can be taken as a constant threshold $\tau^\star$.

\section{Proof of Lemma \ref{lem:v_monotone}} \label{appD}
Fix $c\ge0$ and set $y_2=y_1+c$. We compare the relative performance starting from the busy-start states
$(\Delta,b,m)=(y_1,0,B)$ and $(y_2,0,B)$.

\smallskip
\noindent\textbf{Step 1 (Shifted policy).}
Let $\pi$ be an arbitrary admissible policy for the system started from $y_2$.
Construct a shifted policy $\pi^{(-c)}$ for the system started from $y_1$ as follows: in the busy mode, when the current state is $(\Delta,b,B)$, let $\pi^{(-c)}$ apply the same impulse decision that $\pi$ would apply at the shifted state $(\Delta+c,b,B)$.
This is well-defined since admissibility depends only on the mode (idle vs.\ busy), so both $(\Delta,b,B)$ and $(\Delta+c,b,B)$ admit the same action (preemption).
After a service completion, both systems enter the same idle state under the coupling below; hence no shift is needed thereafter.
By construction, $\pi^{(-c)}$ is non-anticipative and admissible.

\smallskip
\noindent\textbf{Step 2 (Pathwise coupling up to the first delivery).}
Couple the two systems on a common probability space by using the same i.i.d.\ service-time samples for each (re)start and, if policies are randomized, the same auxiliary randomization.
Let $T_{\rm del}$ denote the time of the first service completion (delivery) after the busy start.
By construction, whenever the service age equals $b$, the policy $\pi$ in the $y_2$-system and the policy $\pi^{(-c)}$ in the $y_1$-system make identical impulse decisions; hence the impulse times also coincide up to $T_{\rm del}$ and the deterministic between-impulse evolution implies $b_{y_1}(t)=b_{y_2}(t)$ for $t\le T_{\rm del}$.
Consequently,
\begin{equation}
\Delta_{y_2}(t)=\Delta_{y_1}(t)+c,\qquad \text{for all } t< T_{\rm del}.
\end{equation}
At time $T_{\rm del}$, both systems jump to the idle mode with AoI reset to the realized service time, which is identical under the coupling; hence their post-delivery states coincide, and their future trajectories (and costs) coincide when driven by the same randomness.

\smallskip
\noindent\textbf{Step 3 (Cost comparison).}
Since the impulse times and types coincide under the coupling, the impulse costs are identical. With running cost $g(\Delta,b,m)=\Delta$, the only difference in the accumulated running cost up to $T_{\rm del}$ is the constant AoI offset $c$, so
\begin{equation}
\int_0^{T_{\rm del}} \Delta_{y_2}(t)\,dt
=\int_0^{T_{\rm del}} \Delta_{y_1}(t)\,dt + c\,T_{\rm del}.
\end{equation}
The same identity holds for the relative running cost with baseline subtraction $\rho$:
\begin{equation}
\int_0^{T_{\rm del}} \big(\Delta_{y_2}(t)-\rho\big)\,dt
=\int_0^{T_{\rm del}} \big(\Delta_{y_1}(t)-\rho\big)\,dt + c\,T_{\rm del}.
\end{equation}
Moreover, the continuation bias at $T_{\rm del}$ is the same for both systems because the post-delivery state is identical.

Let $J(\pi;y)$ denote the one-step relative cost functional in the integral DPP from the busy-start state $(y,0,B)$ up to the first delivery under policy $\pi$.
By admissibility, impulses do not accumulate in finite time, so $T_{\rm del}<\infty$; under $\mathbb{E}[Y]<\infty$ we have $\mathbb{E}[T_{\rm del}]<\infty$.
Taking expectations in the above pathwise comparison yields
\begin{equation}
J(\pi;y_2)=J(\pi^{(-c)};y_1)+c\,\mathbb{E}[T_{\rm del}]
\;\ge\; J(\pi^{(-c)};y_1).
\end{equation}

\smallskip
\noindent\textbf{Step 4 (Optimize).}
Since the above holds for any admissible $\pi$ at $y_2$, taking the infimum over $\pi$ gives
\begin{equation}
v(y_2)\ge v(y_1),
\end{equation}
and thus $v(\cdot)$ is nondecreasing. With the normalization $v(0)=0$, we further have $v(y)\ge0$ for all $y\ge0$.

\section{Policy Iteration with Assertion Heavy Tail}\label{appE}
The $g_\pi(y)$, $g_\pi(y)$, and $g_\pi(y)$ in the Poisson equation \eqref{eq:poisson} are given as:
\begin{subequations} \label{eq:bellman_components}
\begin{align}
g_\pi(y) &= A(\tau_\pi(y)) + \int_0^{\tau_\pi(y)} f_Y(t) w_\pi(t) dt, \\
r_\pi(y) &= \begin{multlined}[t] 
            y A(\tau_\pi(y)) + J_1(\tau_\pi(y)) + \bar{F}_Y(\tau_\pi(y)) \kappa_p \\ 
            + \int_0^{\tau_\pi(y)} \! f_Y(t) \big( \kappa_s + t w_\pi(t) + \tfrac{1}{2} w_\pi(t)^2 \big) dt, 
            \end{multlined} \\
(P_\pi v^\pi)(y) &= \begin{multlined}[t] 
            \int_0^{\tau_\pi(y)} f_Y(t) v^\pi(z_\pi(t)) dt \\ 
            + \bar{F}_Y(\tau_\pi(y)) v^\pi(y + \tau_\pi(y)).
            \end{multlined}
\end{align}
\end{subequations}
The details of the Algorithm are shown in \ref{alg:pi-ultra-compact}.

\begin{algorithm*}[t]
\caption{Policy Iteration with Heavy-Tail Acceleration}
\label{alg:pi-ultra-compact}
\KwIn{State grid $y_i=i\,dt$ on $[0,y_{\mathrm{cut}}]$; action grid $t_j=j\,dt$ on $[0,\tau_{\max}]$;
$(f_Y,\bar F_Y)$; $(\kappa_s,\kappa_p)$; slope $s$; tolerances $\varepsilon_v,\varepsilon_\rho$.}
\KwOut{$\rho$, $v(\cdot)$ with $v(0)=0$, idle map $z(\cdot)$, busy map $\theta(\cdot)$.}

Precompute $A(t_j)=\int_0^{t_j}\bar F_Y(u)\,du$ and $J_1(t_j)=\int_0^{t_j}u\bar F_Y(u)\,du$;\;
Build hybrid candidates $\mathcal{J}$ (uniform on $[0,\theta_{\mathrm{fine}}]$, log-spaced beyond) and quantize to grid\;
Initialize $v^{(0)}\equiv 0$, $\rho^{(0)}\leftarrow 0$, and $\theta^{(0)}(\cdot)$\;

\For{$k=0,1,2,\dots$}{
\textbf{Idle map (suffix-min envelope)}\;
Compute $\phi_i=\kappa_s+v^{(k)}(y_i)+\tfrac12 y_i^2-\rho^{(k)}y_i$\;
Backward scan to get $m_i=\min_{r\ge i}\phi_r$ and $a_i\in\arg\min_{r\ge i}\phi_r$\;
Set $z^{(k)}(y_i)\leftarrow y_{a_i}$ and compute $h_I^{(k)}(t_j)=\rho^{(k)}t_j-\tfrac12 t_j^2+m_{i_0}$ for $t_j\le y_{\mathrm{cut}}$
(with $i_0=\mathrm{round}(t_j/dt)$), otherwise use far-field closure\;
Precompute $I_{fh}^{(k)}(t_j)=\int_0^{t_j} f_Y(t)h_I^{(k)}(t)\,dt$ by cumulative quadrature\;

\textbf{Busy improvement (hybrid $\theta$, strict on-grid)}\;
\For{$i=0,\dots,M$}{
Set $\theta^{(k+1)}(y_i)\in\arg\min_{j\in\mathcal{J}} Q^{(k)}(y_i,t_j)$, where $Q^{(k)}$ is \eqref{eq24}
evaluated using $I_{fh}^{(k)}$ and $\hat v^{(k)}(y)$ (round-to-grid lookup if $y\le y_{\mathrm{cut}}$, else far-field
$v^{(k)}(y_M)+s(y-y_M)$)\;
}

\textbf{Policy evaluation (Poisson equation)}\;
Form the sparse system $v=r_\pi-\rho g_\pi+P_\pi v$ induced by $(z^{(k)},\theta^{(k+1)})$,
with $(g_\pi,r_\pi,P_\pi)$ given by \eqref{eq:bellman_components}\;
Impose $v(0)=0$ and close the right boundary by $v(y_M)-v(y_{M-1})=s\,dt$; solve for $(\rho^{(k+1)},v^{(k+1)})$\;

\If{$\|v^{(k+1)}-v^{(k)}\|_\infty\le \varepsilon_v$ \textbf{and} $|\rho^{(k+1)}-\rho^{(k)}|\le \varepsilon_\rho$
\textbf{and} $\theta^{(k+1)}(\cdot)=\theta^{(k)}(\cdot)$ on-grid}{
\textbf{break}\;
}
}
\end{algorithm*}




		%
						\vspace{-1cm}
						\IEEEbiographynophoto{Aimin Li}
						(Member, IEEE) received the B.S. degree (Best
Thesis Award) and the Ph.D. degree (Awarded Best Dissertation Nomination)
in electronic engineering from Harbin Institute of Technology (Shenzhen)
in 2020 and 2025, respectively. From 2023 to 2024, he was a visiting
researcher with the Institute for Infocomm Research (I2R), Agency for Science, Technology, and Research (A*STAR), Singapore. He is currently a postdoctoral researcher at the Middle East Technical University (METU), Ankara, Turkey. His research interests include advanced channel coding, information theory, and wireless communications. He has served as a reviewer for IEEE TIT, IEEE JSAC, IEEE ISIT, among others, and as a session
chair for IEEE Information Theory Workshop 2024 and IEEE Globecom 2024.

		\IEEEbiographynophoto{Yiğit İnce}
        received the B.S. degree in Electrical and Electronics Engineering from Middle East Technical University (METU) in 2025, graduating as the valedictorian. He is currently pursuing the M.Sc. degree in the same department. His research interests include random access protocols, wireless communications, non-terrestrial networks, and decision-making theory.
        
		\IEEEbiographynophoto{Elif Uysal} (Fellow, IEEE) is a Professor of Electrical and Electronics Engineering at the Middle East Technical University (METU), in Ankara, Turkiye. She received the Ph.D. degree from Stanford University in 2003, the S.M. degree from the Massachusetts Institute of Technology (MIT) in 1999 and the B.S. in 1997 from METU. She was elected IEEE Fellow for ``pioneering contributions to energy-efficient and low-latency communications'', and Fellow of the Artificial Intelligence Industry Alliance in 2022. She received an ERC Advanced Grant in 2024 for her project GO SPACE (Goal Oriented Networking for Space). Dr. Uysal is also a recipient TUBITAK BIDEB National Pioneer Researcher Grant (2020), Science Academy of Turkey Young Scientist Award (2014). She has chaired the METU Parlar Foundation for Education and Research since 2022. She founded FRESHDATA Technologies in 2022.

\end{document}